\pgfplotsset{compat=newest} 
\pgfplotsset{plot coordinates/math parser=false}
\newtheorem{defn}{Definition}
\newtheorem{thm}{{\cal T}heorem}
\newtheorem{cor}{Corollary}
\newtheorem{prop}{Proposition}
\newtheorem{lem}{Lemma}
\newtheorem{conj}{Conjecture}
\newtheorem{condition}{Condition}
\newtheorem{constr}{Construction}
\newtheorem{note}{Remark}
\newtheorem{claim}{Claim}
\newtheorem{example}{Example}
\newcommand{\bit}{\begin{itemize}}
\newcommand{\eit}{\end{itemize}}
\newcommand{\bcor}{\begin{cor}}
\newcommand{\ecor}{\end{cor}}
\newcommand{\beq}{\begin{equation}}
\newcommand{\eeq}{\end{equation}}
\newcommand{\beqn}{\begin{equation}}
\newcommand{\eeqn}{\end{equation}}
\newcommand{\bea}{\begin{eqnarray}}
\newcommand{\eea}{\end{eqnarray}}
\newcommand{\bean}{\begin{eqnarray*}}
\newcommand{\eean}{\end{eqnarray*}}
\newcommand{\ben}{\begin{enumerate}}
\newcommand{\een}{\end{enumerate}}
\newcommand{\bdefn}{\begin{defn}}
\newcommand{\edefn}{\end{defn}}
\newcommand{\bnote}{\begin{note}}
\newcommand{\enote}{\end{note}}
\newcommand{\bprop}{\begin{prop}}
\newcommand{\eprop}{\end{prop}}
\newcommand{\blem}{\begin{lem}}
\newcommand{\elem}{\end{lem}}
\newcommand{\bthm}{\begin{thm}}
\newcommand{\ethm}{\end{thm}}
\newcommand{\bconj}{\begin{conj}}
\newcommand{\econj}{\end{conj}}
\newcommand{\bconstr}{\begin{constr}}
\newcommand{\econstr}{\end{constr}}
\newcommand{\bpf}{\begin{proof}}
\newcommand{\epf}{\end{proof}}
\begin{document}
	\title{Straggler Mitigation with Tiered Gradient Codes}
\author{Shanuja Sasi, V. Lalitha, Vaneet Aggarwal,~\IEEEmembership{Senior Member,~IEEE, } and B. Sundar Rajan,~\IEEEmembership{Fellow,~IEEE} \thanks{S. Sasi and B. Sundar Rajan are with the Department of Electrical Communication Engineering at Indian Institute of Science Bangalore 560012, email: $\{$shanuja,bsrajan$\}$@iisc.ac.in. S. Sasi was with Purdue University when this work was performed. V. Lalitha is with the Signal Processing and Communications Research Center, IIIT Hyderabad, email: lalitha.v@iiit.ac.in. V.  Aggarwal is with the School of Industrial Engineering and the School of Electrical and Computer Engineering at Purdue University, West Lafayette, IN 47907, email: vaneet@purdue.edu.}}


	\maketitle
	\begin{abstract}
	
	Coding theoretic techniques have been proposed for  synchronous Gradient Descent (GD) on multiple servers to mitigate stragglers. These techniques provide the flexibility that the job is complete when any $k$ out of $n$ servers finish their assigned tasks. The task size on each server is found based on the values of $k$ and $n$. However, it is assumed that all the $n$ jobs are started when the job is requested. In contrast, we assume a tiered system, where we start with $n_1\ge k$ tasks, and on completion of $c$ tasks, we start $n_2-n_1$ more tasks. The aim is that as long as $k$ servers can execute their tasks, the job gets completed. This paper exploits the flexibility that not all servers are started at the request time to obtain the achievable task sizes on each server. The task sizes are in general lower than starting all $n_2$ tasks at the request times thus helping achieve lower task sizes which helps to reduce both the job completion time and the total server utilization. 
\end{abstract}

\section{Introduction}


Many distributed machine learning applications require multiple servers to perform distributed computation of gradient descent. Distributed gradient descent involves division of gradient descent tasks across multiple servers, and the job is finished when all the tasks are complete. The slowest tasks that determine the job execution time are called stragglers. Coding theoretic techniques have been proposed to achieve high-quality algorithmic results in the face of uncertainty, including mitigation of stragglers. \cite{tandon2017gradient,ye2018communication,lee2018speeding,dutta2016short,li2018fundamental,wan2018fundamental,yu2018straggler}.
These approaches have been shown to be essential to manage stragglers in distributed computation of gradient descent \cite{tandon2017gradient,ye2018communication,dutta2018slow}. However, these approaches assume that all the distributed tasks are started at the same time, which can be shown to have a large server utilization cost. To alleviate that, this paper aims to provide a tiered framework for efficient gradient code designs that allow for starting certain tasks at the completion of some tasks with an aim to have an efficient tradeoff between the completion time of the job and the server utilization cost to complete the tasks.

In this paper, we propose a coding-theoretic approach for gradient coding, called \textit{Tiered Gradient Coding}. Initially at the service request time, tasks are launched on $n_1$ servers. On the completion of tasks from $c$ of the servers, tasks are launched on $n_2-n_1>0$ more servers, where $n_2$ is the total number of servers. We note that the earlier studied gradient codes \cite{tandon2017gradient,ye2018communication,dutta2018slow} do not have two tiers and the tasks for $n_2$ servers are decided at the same time. By having the flexibility of obtaining the results from $c$ servers leads to reduction of per-server workload as compared to deciding tasks for $n_2$ servers at the same time. Consider as an example of gradient coding scheme in Fig. \ref{fig: n2.}, where the data is split into $4$ partitions $D_1,D_2,D_3$ and $D_4$. Server $W_1$ computes the gradients $g_1,g_2$ and $g_3$ of the  partitions $D_1,D_2$ and $D_3$ respectively. Similarly, server $W_2$ computes the gradients $g_2,g_3$ and $g_4$, server $W_3$ computes the gradients $g_3,g_4$ and $g_1$ and server $W_4$ computes the gradients $g_4,g_1$ and $g_2$. Each server sends a linear combination of the gradients they have computed. It is enough to get the results from any two servers to calculate the overall sum of gradients. The techniques to calculate the linear combination are provided in \cite{tandon2017gradient}. The computation cost per server is proportional to $\frac{3}{4}$.  In  Fig. \ref{fig: tier}, we describe the proposed tiered gradient coding framework, where the data is split into $3$ partitions $D'_1,D'_2$ and $D'_3$. Initially, only three servers ($W_1,W_2$ and $W_3$) are launched. Server $W_1$ computes the gradients $g'_1$ and $g'_2$ of the  partitions $D'_1$ and $D'_2$ respectively. Similarly, server $W_2$ computes the gradients $g'_2$ and $g'_3$ and server $W_3$ computes the gradients $g'_3$ and $g'_1$. Without loss of generality, assume that the server $W_1$ finishes its task first, {\em i.e.}, $W_1$ sends a linear combination of the gradients $g'_1$ and $g'_2$. Then, server $W_4$ is launched, which computes the gradients $g'_3$ and $g'_1$ of the  partitions $D'_3$ and $D'_1$ respectively. The partitions assigned to $W_4$ depends on the server which had completed the task initially. Master waits for one of the servers - $W_2,W_3$ and $W_4$ to complete the task. Master can calculate the sum from the result from $W_1$ and any one of the servers - $W_2,W_3$ or $W_4$. For example, if $W_4$ completes the task first, $g'_1+g'_2+g'_3= (\frac{g'_1}{2}+g'_2) + (\frac{g'_1}{2}+g'_3)$. Thus, we see that the task per server reduces from $\frac{3}{4}$ to $\frac{2}{3}$ for the same number of four servers, and both schemes guarantee that any two servers completion can provide the required computation result. 

\begin{figure*}[!t]
	\begin{minipage}[]{.4\textwidth}
		\centering
		\includegraphics[width=\textwidth]{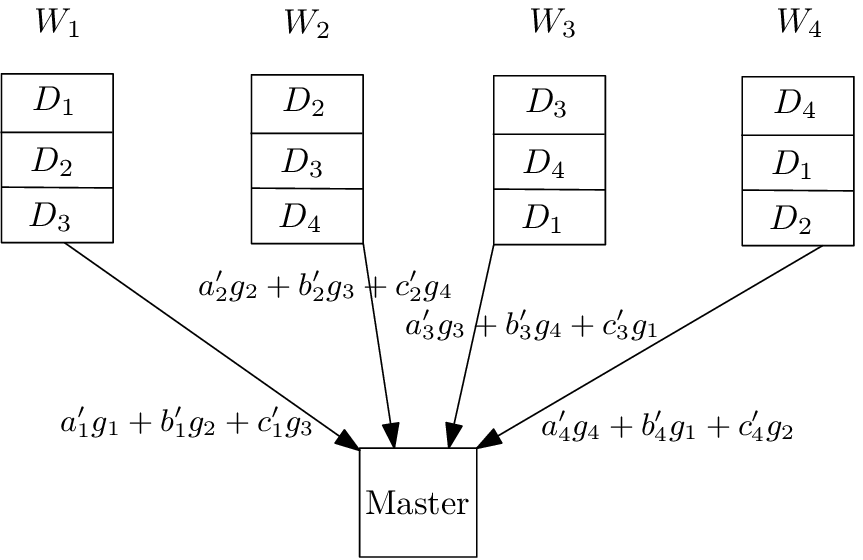}
		\caption{Gradient coding: Each server transmits a scalar and master calculates the sum from the result of any two servers, with the total number of servers being four.}
		\label{fig: n2.}
	\end{minipage}
	\hspace{.2in}
	\begin{minipage}[]{.55\textwidth}
		\centering
		\includegraphics[width=\textwidth]{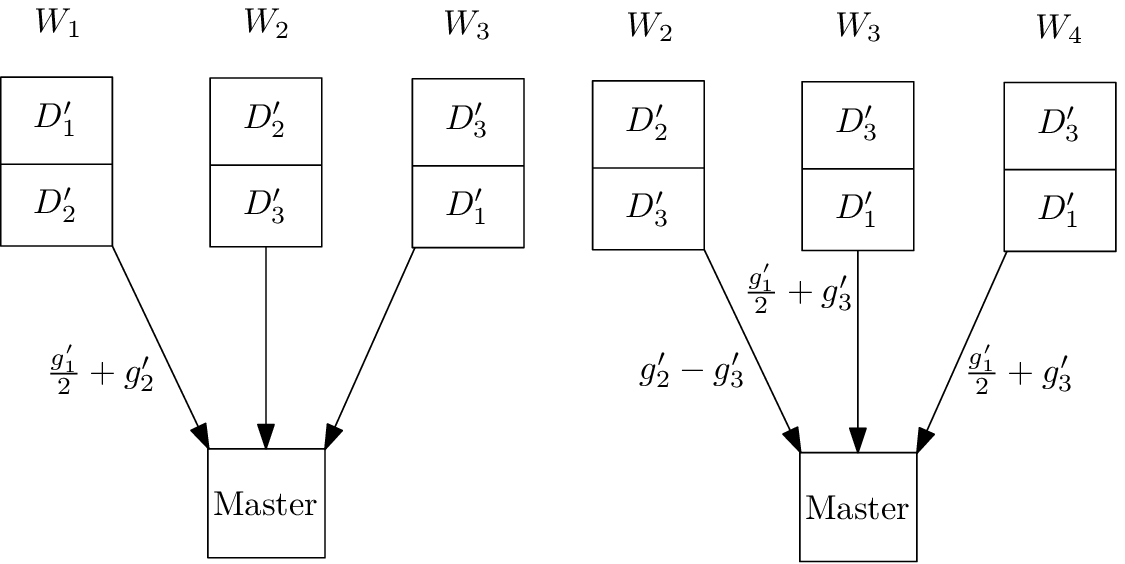}
		\caption{Tiered gradient coding: Initially $3$ servers, i.e. $W_1,W_2$ and $W_3$, are launched. Without loss of generality, assume $W_1$ completes the task first. Then the foruth server $W_4$ is launched. Master waits for one more server to finish the task to calculate the sum.}
		\label{fig: tier}
	\end{minipage}
\end{figure*}

\begin{figure*}[hbt!]
	\begin{subfigure}[]{.31\textwidth}
		\resizebox{\textwidth}{!}{%
			\includegraphics[width=\linewidth]{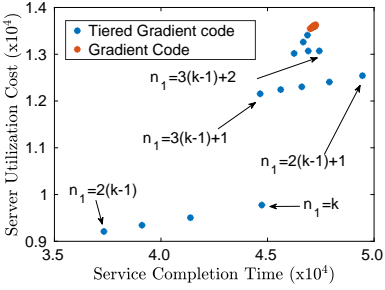}
		}
		\caption{Task completion time distributed as SE1}
		\label{fig:SUC_Vs_SCT_k5_fixed_p}
	\end{subfigure}
	\hspace{.1in}
	\begin{subfigure}[]{.31\textwidth}
		\resizebox{\textwidth}{!}{%
			\includegraphics[width=\linewidth]{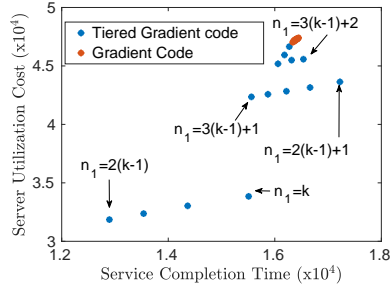}
		}
		\caption{Task completion time distributed as SE2}
		\label{fig:SUC_Vs_SCT_k5_var_p}
	\end{subfigure}
	\hspace{.1in}
	\begin{subfigure}[]{.31\textwidth}
		\resizebox{\textwidth}{!}{%
			\includegraphics[width=\linewidth]{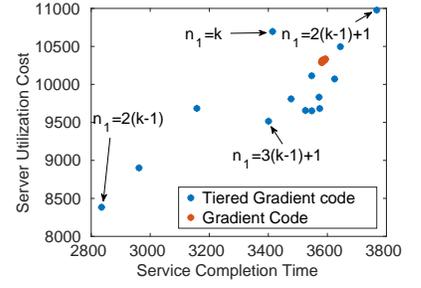}
		}
		\caption{Task completion time distributed as Pa}
		\label{fig:SUC_Vs_SCT_k5_pareto}
	\end{subfigure}
	\caption{Server Utilization Cost as a function of Service Completion Time when we vary $n_1 \in [k,n_2]$ for $n_2=15,c=1$ and $k=5$.}\label{Figcase1}
\end{figure*}

We note that waiting to launch tasks $n_2-n_1$ servers can affect the job completion time negatively, while the decreased task size per server affects the job completion time positively, thus making it apriori unclear whether the completion time increases or decreases. Similarly, server utilization cost (total time any server is being used for computation) may also increase if the completion time is increased, and decrease since $n_2-n_1$ servers are not used till $c$ complete the computation. A tradeoff between the two metrics of completion time and server utilization cost have been considered for coded tasks \cite{aktas2018straggler}, while we show that efficient coding can also decrease the task size when such multi-forking capability (starting $n_2$ after $c$ have finished) can lead to decreased task size further reducing both the metrics. As an example, we consider $n_2=15$, $c=1$, and $k=5$, where $k$ is the number of servers that must complete execution for the job to be completed. For three models of the time taken for each server to complete the task (with the models described in Section \ref{sec:nume}), Fig. \ref{Figcase1} shows for different values of $n_1$, the tradeoff (between service completion time and server utilization cost) points that can be achieved with tiered-gradient codes as compared to gradient coding where all $n_2$ tasks must be decided apriori. Since the gradient codes are independent of $n_1$, they are only a point, while we achieve different tradeoff points for different $n_1$. We see that the proposed codes perform significantly better in both the completion time and server utilization costs and efficient parameters can be decided based on the application requirements. 

The key contribution of the paper is a new framework for tiered gradient codes which allows for a delayed start of the tasks at the servers. A novel code construction is provided that exploits this flexibility, and reduces the amount of computation that each server has to perform. 

\if 0
We provide a span condition which needs to be satisfied by tiered gradient codes. We construct codes for the tiered gradient coding framework if $c=1$. We generalize it for any $c$ in the appendix. 

The rest of the paper is organized as follows. Section \ref{related work} summarizes the related work in this area. Tiered gradient code framework is build in Section \ref{tiered grad framework}. The span condition that the tiered gradient codes have to satisfy is also given in the same section. Section \ref{code const} provides our code constructions for the tiered gradient codes when $c=1$ with $n_1 \leq 2(k-1)$ and $  n_1\ge 3(k-1)$ respectively. The code constructions for the rest of the cases are provided in the appendix. The tiered gradient codes are generalized for any $c$ in the appendix. Section \ref{conclusion} concludes the paper.

Reviews of gradient codes proposed in \cite{tandon2017gradient} is provided in appendix. Some of the missing constructions and proofs of all the lemmas, claims and constructions are provided in the appendix due to the lack of space.

\fi 
\if 0
\subsection{Contributions}

The authors of \cite{tandon2017gradient} proposed fractional repetition and cyclic repetition gradient codes. In this paper, we use an extension of fractional repetition gradient codes and  cyclic repetition codes. The proposed code designs are shown to significantly improve the amount of computation per server by having each server process smaller amount of data, as compared to the single-tier case where the gradient coding on all the servers is used. 

The key benefit is that the result of first specific servers help determine what to compute in the rest of the servers which are started later. This flexibility of opening servers in a tiered fashion provides a way to do asynchronous stochastic gradient descent as in \cite{dutta2018slow}. However, \cite{dutta2018slow} did not exploit the savings in computation at the servers due to the improved coding that accounts for the servers that have completed operation earlier. We characterize a span condition, which a tiered gradient code has to necessarily satisfy and construct codes meeting this span condition. 

The rest of the paper is organized as follows. Section \ref{related work} summarizes the related work in this area. Tiered gradient code framework is build in Section \ref{tiered grad framework}. The span conditions that the tiered gradient codes have to satisfy is also given in the same section.  Section \ref{code const} provides our code constructions for the tiered gradient codes when $c=1$. Section \ref{conclusion} concludes the paper.

Reviews of the fractional repetition and the cyclic repetition gradient codes proposed in \cite{tandon2017gradient} is provided in appendix. Some of the missing constructions and proofs of all the lemmas, claims and constructions are provided in the appendix due to the lack of space.

\fi 

{\bf Notation: } Throughout this paper, we let $d$ denote the number of samples, $n_2$ denote the total number of servers, $Q$ denote the number of data partitions, $s$ denote the number of stragglers/failures. Let $k$ denote the minimum number of servers required to finish their task such that the overall gradient can be computed. Let $n_1<n_2$ denote the number of servers launched in the first phase. We wait for $c<k$ servers to finish their job first when $n_1$ servers are launched. In the second phase the rest of $n_2-n_1$ servers are launched. The $n_2$ servers are denoted as $\{W_1,W_2, \ldots , W_{n_2}\}$. The partial gradients over $Q$ data partitions are denoted as $\{g_1,g_2,...,g_Q\}$. All matrices under consideration are over real numbers. Let $[z]$ denote the set $\{1,2,...,z\}$ and $[z_1,z_2]$ denote the set $\{z_1,z_1+1,...,z_2\}.$

\section{Related Work}
\label{related work}
Coding-theoretic techniques to mitigate the effect of stragglers in gradient computation were introduced in \cite{tandon2017gradient}. In \cite{ye2018communication}, coding techniques to reduce the running time of distributed learning tasks have been provided. A stochastic block code and an efficient decoding method for approximate
gradient recovery are provided in  \cite{charles2018gradient}. A distributed computing scheme called Batched Coupon’s Collector to mitigate the effect of stragglers in gradient methods is proposed in \cite{li2018near}. 
In \cite{halbawi2018improving}, a straggler mitigation scheme that facilitates the implementation of distributed gradient descent
in a computing cluster is presented. They also  proposed a theoretical delay model which allows to minimize the expected running time. In \cite{raviv2017gradient}, an approximate variant of the gradient coding problem is introduced, in which approximate gradient computation is done instead of the exact computation.

A cost vs. latency analysis of using simple replication or erasure coding for straggler mitigation in executing jobs with many tasks is studied in \cite{wang2015efficient,aktas2018straggler}. Both in \cite{wang2015efficient} and \cite{aktas2018straggler}, the authors have showed that the delayed relaunch of stragglers yields significant reduction in cost and latency. In this paper, we show that efficient coding further allows reduction of task size per server with a delayed execution of tasks, a flexibility which had not been studied earlier. Thus, our coding-theoretic techniques can further help reduce the job completion time by exploiting a better choice of the parameters since starting more servers need larger task size at each server in general. Thus, this paper aims at finding efficient code constructions that minimizes per-server task sizes with the flexibility of tiered launching of tasks.

\section{Review of Gradient Codes}\label{review_gradient}
\label{tiered grad framework}
\subsection{Distributed Gradient Descent Computation}
Given a dataset $D$ with $d$ examples, $D=\{(X_i,Y_i)\}_{i=1}^{d}$, where $X_i \in \mathbb{R}^{p}$ and $Y_i \in \mathbb{R}$, we want to learn parameters $\beta \in  \mathbb{R}^{p}$ by minimizing a generic loss function $L(D;\beta) = \sum_{i=1}^{d} L(X_i,Y_i;\beta)$. We update the parameter $\beta$ according to the following rule:
$\beta^{(t+1)} = h(\beta^{(t)}; g^{(t)}), $ where $g^{(t)} =  \nabla L(D; \beta^{(t)}) = \sum_{i=1}^{d} \nabla L(X_i,Y_i; \beta^{(t)})$ is the gradient of the loss at the current estimate of the parameters and $h$ is a gradient-based optimizer. We consider the problem of distributed synchronized gradient descent where the $d$ data samples are divided into $Q$ partitions, $D_1, D_2, \ldots, D_Q$. The partial gradient computed on the $j^{\text{th}}$ partition is given by $
g_j^{(t)} = \sum_{(X,Y) \in D_j} \nabla L(X,Y; \beta^{(t)}). 
$
The overall gradient required to compute the update on $\beta^{(t)}$ is given by $g^{(t)}=\sum_{j=1}^Q g_j^{(t)}$. We will omit the superscript $t$ in this paper to simplify the notation. Next, we provide a review of two classes of conventional gradient codes known as fractional repetition gradient codes and cyclic repetition gradient codes \cite{tandon2017gradient}.   

\subsection{Gradient Coding Framework}
For $n_2$ workers and $Q$ data partitions, we have a set of linear equations:
$AF = 1_{f \times Q}, $
where $f$ denotes the number of combinations of surviving servers/non-stragglers, $1_{f \times Q}$ is the all 1
matrix of dimension $f \times Q$ and we have matrices $A \in \mathbb{R}^{f \times n_2}$, $F \in \mathbb{R}^{n_2 \times Q}$.
The $i^{th}$ row of $F, \bold{f_i},$ is associated with the $i^{th}$ server $W_i$. The support of $\bold{f_i}, supp(\bold{f_i})$, represents the data partitions corresponding to the server $W_i$ and the entries of $\bold{f_i}$ encode a linear combination
over their gradients that server $W_i$ transmits. Let $\bold{g} \in \mathbb{R}^{Q \times p}$ be a matrix with each row being the  partial gradient of a data partition i.e.
$\bold{g} = [g_1, g_2, ..., g_Q] $.
Then, server $W_i$ transmits $\bold{f_i} \bold{g}$. Each row of $A$, denoted by $\bold{a_i}$, is associated with a specific straggler scenario, to which tolerance is desired. In particular, any row $\bold{a_i}$, with support  $supp(\bold{a_i})$, corresponds to the scenario where the server indices in $supp(\bold{a_i})$ are non-stragglers. The entries of $\bold{a_i}$ encode a linear combination
which, when taken over the transmitted gradients of the non-straggler servers, $\{\bold{f_u}\bold{g}\}_{ u \in supp(\bold{a_i})}$,
would yield the full gradient. We refer to this system as $(n_2,k)$ gradient code where $k$ is the number of non stragglers.

\subsection{Fractional Repetition Gradient Codes \cite{tandon2017gradient}} Consider the case when $n_2-k+1$ divides $ n_2$. Let $Q = n_2$. Consider the following matrix $
F_j = 1_{(n_2-k+1) \times (n_2-k+1)}, \ \ 1 \leq j \leq \frac{n_2}{n_2-k+1}.
$
The matrix $F$ of the fractional repetition gradient code is constructed as follows: $
F = \begin{bmatrix} F_1& 0 & \ldots & 0 \\ 0 & F_2 & \ldots & 0 \\ \vdots & \vdots & \vdots & \vdots \\ 0 & 0 & \ldots & F_{\frac{n_2}{n_2-k+1}}  \end{bmatrix}.
$

\subsection{Cyclic Repetition Gradient Codes \cite{tandon2017gradient}} \label{sec:rev_cyclic}

This class of gradient codes exist for all values of $k$ and $n_2$. Let $Q = n_2$ and let the columns of $F$ be indexed by $[0, n-1]$. The support structure of the matrix $F$ is as follows: \begin{equation} \label{eq:suppf}
supp(\bold{f_i}) = [i-1,  i+(n-k-1)] \mod n_2.
\end{equation}
Now, we will present a randomized construction of the matrix $F$. Consider a matrix $H$ of size $(n_2-k) \times n_2$ whose first $(n_2-1)$ columns are picked at random i.i.d. from a Gaussian distribution $\mathcal{N}(0,1)$. The last column of $H$ is obtained as follows: $
H(:,n_2-1) = - \sum_{i=0}^{n_2-2}  H(:,i).$
Each vector $\bold{f_i}$ is calculated by solving the following equation $
\bold{f_i}|_{L_i } H(:,L_i)^T = 0, $
where $L_i$ is the  support of $\bold{f_i}$ as given by Equation \eqref{eq:suppf}.

The span condition for the conventional gradient code framework in \cite{tandon2017gradient} is given below.
\begin{lem} \cite{tandon2017gradient} \label{lem:span_b}
	Consider a gradient code $(F, A)$. For every $I \in [n_2]$,  such that $|I| = k$, we have
	$
	1_{1 \times Q} \in \text{span}\{\bold{f_i} | i \in I \}.
	$
\end{lem}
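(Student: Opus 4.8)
The plan is to read the conclusion off directly from the defining equation $AF = 1_{f \times Q}$ of a gradient code, after selecting the right row of $A$. First I would recall the semantics attached to the rows of $A$: a row $\mathbf{a}_i$ is indexed by a straggler scenario to which tolerance is desired, $supp(\mathbf{a}_i)$ is a set of non-straggler indices, and the entries of $\mathbf{a}_i$ encode how to combine the transmitted gradients $\{\mathbf{f}_u \mathbf{g}\}_{u \in supp(\mathbf{a}_i)}$ to recover the full gradient. Since an $(n_2,k)$ gradient code is required to tolerate \emph{any} $s = n_2 - k$ stragglers, the matrix $A$ must contain, for every $k$-subset $I \subseteq [n_2]$, at least one row $\mathbf{a}$ with $supp(\mathbf{a}) \subseteq I$ — namely the row handling the scenario in which (a subset of) the servers indexed by $I$ are the surviving non-stragglers; in fact $f = \binom{n_2}{s}$ with one row per $k$-subset. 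This "$A$ sees every $k$-subset" step is the only place where the definition of the code, rather than pure linear algebra, is invoked, and I expect it to be the main point to state carefully.

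Given such a row $\mathbf{a}$ for a fixed $k$-subset $I$, the corresponding row of the product $AF$ is $\mathbf{a} F = \sum_{u \in supp(\mathbf{a})} a_u \mathbf{f}_u$, which by hypothesis equals $1_{1 \times Q}$. Because $supp(\mathbf{a}) \subseteq I$, this displays $1_{1 \times Q}$ as an explicit linear combination of $\{\mathbf{f}_u : u \in supp(\mathbf{a})\} \subseteq \{\mathbf{f}_i : i \in I\}$, hence $1_{1 \times Q} \in \text{span}\{\mathbf{f}_i : i \in I\}$. Since $I$ was an arbitrary $k$-subset of $[n_2]$, the claim follows.

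I would also add a remark that this span condition is essentially tight: conversely, if $1_{1 \times Q} \in \text{span}\{\mathbf{f}_i : i \in I\}$ holds for every $k$-subset $I$, one can build a valid decoding matrix $A$ by taking as rows the coefficient vectors realizing these combinations, so the condition characterizes decodability. No estimates or case analysis are needed — the entire content is unpacking the matrix identity $AF = 1_{f \times Q}$ together with the robustness requirement, and the only mild subtlety is the bookkeeping linking "tolerates $n_2 - k$ stragglers" to "$A$ has a row supported inside each $k$-set."
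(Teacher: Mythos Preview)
Your proposal is correct and is exactly the standard argument: the paper itself states Lemma~\ref{lem:span_b} without proof (citing \cite{tandon2017gradient}), but its proof of the tiered generalization, Lemma~\ref{lem:two_span_b}, uses precisely your approach --- select the row $\mathbf{a}_i$ of $A$ whose support matches the chosen non-straggler set, restrict the product $AF = 1$ to that row, and read off $1_{1\times Q}$ as a linear combination of the $\mathbf{f}_i$ indexed by $supp(\mathbf{a}_i)$. The paper then remarks that Lemma~\ref{lem:span_b} is the special case $n_2=n_1$, $A_M=A$ of Lemma~\ref{lem:two_span_b}, which amounts to exactly what you wrote.
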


We will present the support condition which is a sufficient condition (\cite{tandon2017gradient}) to show that the randomized construction of $F$ satisfies the span condition in Lemma \ref{lem:span_b} with probability $1$. 

\begin{condition} \label{supp_cond_gradient}
	For every $\mathcal{T}$, which is a subset of $[n_2]$ of size $\ell$ ($1 \leq \ell \leq k$), we have
	$
	|\cup_{i \in \mathcal{T}} L_i| \geq (n_2-k) + \ell.
	$
\end{condition}
\section{Tiered Gradient Code Framework}
\label{tiered grad framework}

\subsection{Tiered Gradient Codes} 

In the conventional gradient code framework, we assume that there are $n_2$ servers which start computing the partial gradients assigned to them. We want to be able to compute the overall gradient whenever $k < n_2$ servers finish.
Each server sends a linear combination of the partial gradients which it has computed and sends it back to the master node.
The master node aggregates all the linear combinations of the partial gradients and performs a linear combination in turn to obtain the overall gradient $\sum_{j=1}^Q g_i$. 

In the tiered gradient code framework, we consider two phases. In the first phase, $n_1$ servers start computing the partial gradients from the data partitions assigned to them and $c$ out of the $n_1$ servers complete their gradient computation by the end of the first phase. In the second phase, $n_1-c$ servers continue their tasks which were started in the first phase and $n_2 - n_1$ servers start computing the partial gradients assigned to them in the second phase. The assignment of the data partitions to $n_2-n_1$ servers in the second phase is decided based on which $c$ servers out of the $n_1$ servers have finished. We want to be able to compute the overall gradient whenever $k$ servers out of the $n_2$ servers finish. This condition is equivalent to saying that we would need the results from any $k-c$ out of the $n_2-c$ servers to complete in the second phase, so that we can compute the overall gradient. We call this set up as $(n_1,n_2,k,c)$ tiered gradient coding. 


\subsection{Span Condition of Tiered Gradient Codes}\label{sec:span}

Consider the $Q$ partial gradients arranged in a column vector as $\bold{g} = [g_1, g_2, \ldots, g_Q]^T$.
Let $F$ denote a matrix of size $n_1 \times Q$ over $\mathbb{R}$. The $i^{\text{th}}$ row of the $F$ matrix is denoted by $\bold{f_i}$, $1 \leq i \leq n_1$. The support of $\bold{f_i}$ indicate the partial gradients which are computed on the $i^{\text{th}}$ server. The quantity $\bold{f_i} \bold{g}$ is the linear combination sent by the $i^{\text{th}}$ server to the master node. 

Let $M \subset [n_1]$ denote the set of $c$ servers which have finished their tasks at the end of the first phase and $\mathcal{M}$ denote the set of all possible $c$ subsets of $[n_1]$. Let $\{ B_M, M \in \mathcal{M}\}$ denote a set of matrices, each of size $(n_2 - n_1) \times Q$. The $i^{\text{th}}$ row of $B_M$ is denoted by $\bold{b_i}$, $1 \leq i \leq n_2-n_1$. The support of $\bold{b_i}$ indicate the partial gradients which are computed on the $i^{\text{th}}$ server among the $n_2-n_1$ servers started in the second phase. The quantity $\bold{b_i} \bold{g}$ is the linear combination sent by the $i^{\text{th}}$ server to the master node. 

Let $\{A_M, M \in \mathcal{M}\}$ denote a set of matrices, each of size $N \times n_2$. Columns of the $A_M$ matrix are indexed by the servers. The rows of the  $A_M$ matrix are denoted by $\bold{a}_i$. Each $\bold{a}_i$ has non-zeros in the $c$ positions corresponding to the subset $M$. The rows of the $A_M$ matrix are such that each row will have non-zeros in a distinct subset of $k-c$ out of the $n_2-c$ positions. Hence, the number of possible straggler configurations which can be tolerated by a tiered gradient code described above, for a given set of $c$ servers, is $N =  {n_2 - c \choose n_2 - k + c}$.

The condition for computing the overall gradient from the partial gradients in the tiered gradient code setup is given by
\begin{equation} \label{eq:cond_ab}
A_M \begin{bmatrix} F \\ B_M \end{bmatrix} = 1_{N \times Q}, \ \  \forall M \in \mathcal{M},
\end{equation}
where $1_{N \times Q}$ denotes a matrix all of whose entries are $1$.
We will refer to a $(n_1,n_2,k,c)$ tiered gradient code by $(F, \{(A_M, B_M), M \in \mathcal{M}\})$. Lemma \ref{lem:two_span_b} provides the necessary condition for a code to be a tiered gradient code.

\begin{claim}
	The partial gradients which are computed on the first $n_1$ servers have to constitute a $(n_1, k)$ gradient code.
\end{claim}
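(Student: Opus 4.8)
The plan is to show that if $(F, \{(A_M, B_M), M \in \mathcal{M}\})$ is a valid $(n_1,n_2,k,c)$ tiered gradient code, then the matrix $F$ (of size $n_1 \times Q$) together with a suitable decoding matrix forms an $(n_1, k)$ gradient code, i.e., that for every $I \subseteq [n_1]$ with $|I| = k$ we have $1_{1 \times Q} \in \text{span}\{\bold{f_i} \mid i \in I\}$, which by Lemma \ref{lem:span_b} is exactly the defining span condition. First I would fix an arbitrary subset $I \subseteq [n_1]$ of size $k$. I would then choose a set $M \subseteq I$ of size $c$ (possible since $c < k$); this corresponds to the scenario where the $c$ servers that finish first in phase one all lie inside $I$.

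Next I would exploit Equation \eqref{eq:cond_ab} for this particular $M$: we have $A_M \begin{bmatrix} F \\ B_M \end{bmatrix} = 1_{N \times Q}$. The point is that among the $N = {n_2-c \choose n_2-k+c}$ rows of $A_M$, each has support consisting of the $c$ indices in $M$ together with a distinct $(k-c)$-subset of the remaining $n_2 - c$ server indices. In particular, there is a row $\bold{a}$ of $A_M$ whose support is exactly $M \cup (I \setminus M) = I \subseteq [n_1]$ — note this support avoids all of the phase-two servers $\{n_1+1, \ldots, n_2\}$. For that row, the identity $\bold{a} \begin{bmatrix} F \\ B_M \end{bmatrix} = 1_{1 \times Q}$ involves only the rows of $F$ indexed by $I$ (the $B_M$ rows are multiplied by zero), so $1_{1 \times Q} = \sum_{i \in I} a_i \bold{f_i} \in \text{span}\{\bold{f_i} \mid i \in I\}$. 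Since $I$ was arbitrary, the span condition of Lemma \ref{lem:span_b} holds for $F$, hence $F$ constitutes an $(n_1, k)$ gradient code.

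The main obstacle — really a point to verify carefully rather than a genuine difficulty — is confirming that $A_M$ does contain a row with support exactly $I$. This follows from the stated structure of $A_M$: its rows range over \emph{all} distinct $(k-c)$-subsets of the $n_2 - c$ positions outside $M$, and $I \setminus M$ is one such $(k-c)$-subset lying within $[n_1] \setminus M$; hence the corresponding row has support $M \cup (I \setminus M) = I$. One should also note that the argument uses only that $|I|=k$ and $I \subseteq [n_1]$, and that a size-$c$ subset $M$ of $I$ can always be selected because $c < k$; these are exactly the hypotheses available. A secondary detail is to observe that the existence of such a decoding row for every $k$-subset of $[n_1]$ is precisely the combinatorial content needed to invoke Lemma \ref{lem:span_b}, so no separate construction of the $(n_1,k)$ gradient code's $A$-matrix is required — the relevant rows are already embedded (after deleting the zero columns corresponding to phase-two servers) in the matrices $\{A_M\}$.
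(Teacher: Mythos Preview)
Your proposal is correct and follows essentially the same idea as the paper's proof: any $k$-subset $I$ of $[n_1]$ is a legitimate set of finishers (with some $c$-subset $M \subseteq I$ finishing first), so the tiered code must be able to decode from $I$ alone. The paper states this in two informal sentences, whereas you make the mechanism explicit by selecting $M \subseteq I$ and identifying the row of $A_M$ with support exactly $I$; this is a faithful formalization rather than a different approach.
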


\begin{proof}
	We need to be able to compute the overall gradient whenever $k$ servers finish. This includes the $c$ servers which have computed the gradient in the first phase. Now, since $n_1 \geq k$, all the $k$ servers can be potentially from the first $n_1$ servers and since the property has to held for all possibilities of $c$ servers, the claim follows.
\end{proof}

\begin{lem}[Span Condition] \label{lem:two_span_b}
	Consider a tiered gradient code $(F, \{(A_M, B_M), M \in \mathcal{M}\})$. For every $M \in \mathcal{M}$,  $I_1 \subseteq [n_1] \setminus M$ and $I_2 \subseteq [n_2 - n_1]$, such that $|I_1 \cup I_2| = k-c$, it holds that
	$
	1_{1 \times Q} \in \text{span}\{\bold{f_i} | i \in M \cup I_1; \bold{b_i} | i \in I_2\}.
	$
\end{lem}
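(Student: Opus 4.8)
The plan is to derive the span condition directly from the defining equation \eqref{eq:cond_ab} by isolating a single, carefully chosen row of $A_M$. Fix $M \in \mathcal{M}$, $I_1 \subseteq [n_1] \setminus M$, and $I_2 \subseteq [n_2 - n_1]$ with $|I_1 \cup I_2| = k - c$. By the construction of $A_M$, its rows range over all distinct choices of $k - c$ nonzero positions among the $n_2 - c$ servers outside $M$, each row also being supported on the $c$ positions of $M$. In particular, there is a row $\bold{a}_j$ of $A_M$ whose support is exactly $M \cup I_1 \cup I_2$ (here $I_1$ indexes servers among the first $n_1$ and $I_2$ indexes servers among the last $n_2 - n_1$). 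I would pick that row.

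Next I would read off the $j^{\text{th}}$ row of the matrix identity \eqref{eq:cond_ab}. Writing $C_M = \begin{bmatrix} F \\ B_M \end{bmatrix}$, the equation $A_M C_M = 1_{N \times Q}$ gives, row by row, $\bold{a}_j C_M = 1_{1 \times Q}$. Since $\bold{a}_j$ is zero outside $M \cup I_1 \cup I_2$, the product $\bold{a}_j C_M$ is a linear combination of only those rows of $C_M$ indexed by $M \cup I_1 \cup I_2$ — that is, of $\{\bold{f_i} \mid i \in M \cup I_1\}$ together with $\{\bold{b_i} \mid i \in I_2\}$. Hence $1_{1 \times Q}$ lies in $\text{span}\{\bold{f_i} \mid i \in M \cup I_1;\ \bold{b_i} \mid i \in I_2\}$, which is exactly the claim. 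One small bookkeeping point to check is that $|M \cup I_1 \cup I_2| = c + (k - c) = k$, and since $M$, $I_1$ are disjoint subsets of $[n_1]$ and $I_2$ lives in the second tier, these index sets are genuinely disjoint, so the row of $A_M$ with this precise support does exist among the $N = \binom{n_2 - c}{n_2 - k + c}$ rows.

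The only real subtlety — and the step I would be most careful about — is matching the combinatorial description of $A_M$'s rows to the index set $M \cup I_1 \cup I_2$: one must confirm that "every distinct subset of $k - c$ out of the $n_2 - c$ positions" indeed includes the subset $I_1 \cup I_2$ (viewed inside the $n_2 - c$ servers not in $M$), and that the $c$ fixed nonzero positions are precisely $M$, so that the support of the chosen row is $M \cup I_1 \cup I_2$ and nothing more. Once that identification is made, the argument is just "read off one row of a matrix equation," and no further computation is needed.
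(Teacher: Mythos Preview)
Your proposal is correct and follows essentially the same approach as the paper: both arguments select the row of $A_M$ whose support is $M \cup I_1 \cup I_2$ (after identifying second-tier indices with columns $n_1+1,\ldots,n_2$) and read off the corresponding row of the identity $A_M\begin{bmatrix}F\\B_M\end{bmatrix}=1_{N\times Q}$ to conclude that $1_{1\times Q}$ lies in the required span. Your version is in fact a bit more explicit about why such a row exists and why the index sets are disjoint, which is exactly the bookkeeping the paper's short proof leaves implicit.
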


\begin{proof}
	
	We consider one row of Equation \eqref{eq:cond_ab} for a fixed $M$ and is given by
	\begin{equation} \label{eq:span_b}
	\bold{a_i} \begin{bmatrix} F \\ B_M \end{bmatrix} = 1_{1 \times Q}.
	\end{equation} 
	Let $L_i = supp(\bold{a_i}) = M \cup I_1 \cup \{ n_1 + 2\}$. The above equation can be rewritten as
	$
	\bold{a_i}|_{L_i} \begin{bmatrix} F|_{M \cup I_1} \\ B_M|_{I_2} \end{bmatrix} = 1_{1 \times Q}.
	$
	From Equation \eqref{eq:span_b}, it is clear that there exists a non-zero vector $\bold{a_i}|_{L_i}$ such that the above equation is true. Hence, we can solve for $\bold{a_i}|_{L_i}$ and hence $\bold{a_i}$ can be solved using the above equation.
\end{proof}

The span condition for the conventional gradient code framework in \cite{tandon2017gradient} follows from Lemma \ref{lem:two_span_b} by considering $n_2 = n_1
$ and $A_M = A$.

In order to show that the span condition in Lemma \ref{lem:two_span_b} is satisfied by the tiered gradient codes with probability 1, it is enough to show that the following support condition holds for the code under consideration and rest of the arguments follow exactly as in the proof of Lemma 3 in \cite{tandon2017gradient} (and hence omitted). We will now present the support condition for tiered gradient codes which is a sufficient condition to show that the randomized construction of the $F$ and $B_M$ matrices satisfy the span condition in Lemma \ref{lem:two_span_b}. 
\begin{condition}\label{eq:supp_cond_tgc}
	
	Consider a set of matrices $(F, \{(A_M, B_M), M \in \mathcal{M}\})$. For every $M \in \mathcal{M}$,  $I_1 \subseteq [n_1] \setminus M$, $I_2 \subseteq [n_2 - n_1]$, such that $|I_1 \cup I_2| = k-c$ and for every $\mathcal{T}_1 \subset M \cup I_1 $  and $\mathcal{T}_2 \subset I_2$ of size $|\mathcal{T}_1| + |\mathcal{T}_2| = \ell$ ($1 \leq \ell \leq k$), it needs to satisfy the following inequality for the above set of matrices to represent a tiered gradient code: $
	|\cup_{i \in \mathcal{T}_1} L_i \cup_{j \in \mathcal{T}_2} Z_j | \geq (n_1-k) + \ell.$
\end{condition}
\section{Tiered Gradient Coding}
\label{code const}

In this section, we provide our results for $(n_1,n_2,k,c)$ tiered gradient codes for the entire range of $n_1$ and $n_2$. We define the amount of computation per server as the fraction of data that is used by a server to perform computation. More the data, more is the computation time.

\begin{thm}{\label{thm: main results}}
	The amount of computation per server of $(n_1, n_2, k, c)$ tiered gradient code is as follows 
	
	\begin{enumerate}[leftmargin=.2in]
		\item For $c=1,n_1=k, n_2-n_1=1$ and  even $k$, the amount of computation per server is $\frac{2(k-1)}{k^2} $.
		\item For $c=1, k\le n_1 \leq 2(k-1) $ and $n_2 > 2(k-1)$, the amount of computation per server is $\frac{1}{2} $.
		
		

		\item For $c=1$ and $n_1,n_2\ge 3(k-1)$,  the amount of computation per server is  $\frac{n_2 -k+1 -G_1 }{n_2-G_1}$, where $G_1=	\max\{  \min\{n_2-(n_1+p^{*}),\lfloor \frac{n_1+p^{*}-k+1}{k-1} \rfloor\},$ $ \min\{n_2-n^{+},C^{*}_{n^{+}}\}, n_2-n_{min} \}, $ $p^{*}=  \lceil n_2 -n_1 - \frac{n_2-k+1}{k} \rceil$, $n^{+}=\min{\{n' \in [\max{\{n_1,n_2-6\}},n_2-1]\}}$ such that $C^{*}_{n^{+}} =\max_{n'\in [\max{\{n_1,n_2-6\}},n_2-1]} C^{*}_{n'}$, $n_{min} = \min{\{n'' \in [\max{\{n_1,n_2-6\}},n_2-1]\}} $ such that $n_2 \leq n'' + C^{*}_{n''} $. The values of $C^{*}_{n} \forall n\in [n_2] $ are provided in Table \ref{tab2}.
		
		
		\item For $c=1, 2(k-1) < n_1 < 3(k-1)$ and $n_2 \ge 3(k-1)$,  the amount of computation per server is $\frac{n_2 -k+1 -G_2 }{n_2-G_2}$, where $G_2=	\max\{ \min{\{n_2-(n_1+p^{*}),\lfloor \frac{n_1+p^{*}-k+1}{k-1} \rfloor\}},$ $ \min{\{n_2-n^{+},C^{*}_{n^{+}}\}}, n_2-n_{min} \},
		$ $ 
		p^{*}=\max{\{3(k-1)-n_1,  \lceil n_2 -n_1 - \frac{n_2-k+1}{k} \rceil\}}, n^{+}= \min{\{n' \in [\max{\{3(k-1),n_2-6\}},n_2-1]\}}$ such that $C^{*}_{n^{+}} =\max_{n'\in [\max{\{n_1,n_2-6\}},n_2-1]} C^{*}_{n'}$, $n_{min} = \min{\{n'' \in [\max{\{3(k-1),n_2-6\}},n_2-1]\}} $ such that $n_2 \le n'' + C^{*}_{n''} $. The values of $C^{*}_{n} \forall n\in [n_2] $ are provided in Table \ref{tab2}.
		
		\item For $c>1, n_1 \leq 2(k-1)$ and $n_2 > 2(k-1)$,  the amount of computation per server is $\frac{1}{2}$.

		\item For $c>1$ and $n_1\ge 2(k-1) + (k-c)$,  the amount of computation per server is $\frac{n_2-k+1-G_3}{n_2-G_3}$, where $G_3=\min{\{n_2-n_1,\lfloor \frac{n_1+p^{*}-k+c}{k-1} \rfloor \}},$ $p^{*}=\lceil n_2 -n_1 - \frac{n_2-k+c}{k} \rceil$.
		
		\item For $c>1, 2(k-1) < n_1 < 2(k-1)+(k-c)$ and $n_2 \ge 2(k-1)+(k-c)$,  the amount of computation per server is $\frac{n_2 -k+1 -G_4 }{n_2-G_4}$, where $G_4=\min{\{n_2-(2(k-1)+(k-c)),\lfloor \frac{n_1+p^{*}-k+c}{k-1} \rfloor \}},$ $p^{*}=\max{\{(2(k-1)+(k-c))-n_1,  \lceil n_2 -n_1 - \frac{n_2-k+c}{k} \rceil\}}$.

	\end{enumerate} 
\end{thm}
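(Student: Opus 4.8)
The plan is to establish each of the eight cases by exhibiting an explicit support layout for the matrix $F$ and for the family $\{B_M : M \in \mathcal{M}\}$, checking that this layout obeys the support condition of Condition~\ref{eq:supp_cond_tgc}, and then reading off the amount of computation per server as $w/Q$, where $Q$ is the number of data partitions used and $w$ is the largest Hamming weight among the rows of $F$ and of the $B_M$. By the argument sketched after Lemma~\ref{lem:two_span_b} (which mirrors the proof of Lemma~3 of \cite{tandon2017gradient}), any layout meeting Condition~\ref{eq:supp_cond_tgc} admits a randomized real-coefficient assignment that satisfies the span condition of Lemma~\ref{lem:two_span_b} with probability~$1$, and hence gives a valid $(n_1,n_2,k,c)$ tiered gradient code of the claimed load; so the achievability half of each case reduces entirely to combinatorics of supports. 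For Cases~1, 2 and 5 this yields the stated closed form directly, while in Cases~3, 4, 6, 7 and 8 the quantity $G_i$ is the value of an internal optimization parameter --- informally, the number of coordinates that can be dropped from each support while every union of $\ell$ supports still attains the required size $(n_1-k)+\ell$ --- so for those cases the proof must additionally show that the displayed $G_i$ is the \emph{largest} feasible value, i.e.\ that no tiered code does strictly better.

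First I would carry out the reduction common to all cases. By the Claim, $F$ is forced to be an $(n_1,k)$ gradient code, so $F$ alone is constrained by Condition~\ref{supp_cond_gradient}; and by Lemma~\ref{lem:two_span_b}, for every $M$ the $c$ rows indexed by $M$ together with $F|_{[n_1]\setminus M}$ and $B_M$ must recover $1_{1\times Q}$ from every $(k-c)$-subset of the $n_2-c$ surviving servers, which is an $(n_2-c,\,k-c)$-style recovery requirement on the same (possibly refined) partition. This lets me phrase the whole problem as the choice of a partition count $Q$ together with a parameter $0\le p\le n_2-n_1$ recording how many second-phase servers receive a support fixed in advance (independent of $M$): the first $n_1+p$ servers carry a cyclic- or fractional-repetition-style pattern whose restriction to $[n_1]$ is itself an $(n_1,k)$ code, and the remaining $n_2-n_1-p$ servers carry $M$-dependent supports. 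The resulting load is $(n_2-k+1-G)/(n_2-G)$, where $G$ is the achievable ``window reduction,'' and one optimizes $G$ over $p$ (which produces the displayed $p^\ast$), over the prefix length $n^{+}$ on which the intermediate cyclic code is built, and over the minimal prefix $n_{min}$ that still covers all $n_2$ positions --- the three competing quantities inside the $\max$ defining $G_1,\dots,G_4$; the numbers $C^\ast_n$ in Table~\ref{tab2} are precisely the best window reductions for such an intermediate code on $n$ servers when $n_2-n\le 6$, obtained by a finite case check (hence tabulated rather than given by a formula).

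Then I would treat the cases. For $n_1\le 2(k-1)$ (Cases~2 and 5) I use a two-block fractional-repetition layout: take $Q=2$ refined blocks, place at most $k-1$ of the phase-1 servers on each block (possible since $n_1\le 2(k-1)$, so that $F$ is a valid $(n_1,k)$ code), and, for a finisher set $M$, put all $n_2-n_1$ phase-2 servers on the block that $M$ does not cover; a short check shows that for every $M$ every $(k-c)$-subset of the survivors meets both blocks, giving load $w/Q=1/2$, and the matching lower bound is that a load below $1/2$ would force some $k$-subset of phase-1 servers to lie inside a single block, violating the Claim. Case~1 is a dedicated small-parameter construction: with $k$ even, $n_1=k$, $n_2=k+1$, refine into $Q=k^2/2$ blocks, use a balanced combinatorial design so that each phase-1 server touches exactly $k-1$ blocks, and choose the support of the single phase-2 server as a function of the finisher $M$ so that it completes every $(k-1)$-subset; a counting argument against Condition~\ref{eq:supp_cond_tgc} shows weight $k-1$, i.e.\ load $2(k-1)/k^2$, is optimal. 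For the large-$n_1$ regimes (Cases~3, 4, 6 and 7) I generalize the cyclic-repetition code of Section~\ref{sec:rev_cyclic}: lay a cyclic support window of size $n_2-k+1-G$ around the $n_2$ servers, force its restriction to the first $n_1$ servers to be an $(n_1,k)$ code, and enlarge $G$ as much as Condition~\ref{eq:supp_cond_tgc} permits. The bookkeeping for $c>1$ differs from $c=1$ only by replacing ``$+1$'' with ``$+c$'' in the covering requirement, which is why Cases~6--8 are the $c>1$ analogues of Cases~2--4.

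The step I expect to be the main obstacle is the lower bound for Cases~3, 4, 7 and 8 --- proving that the layered optimum $G_i$ cannot be beaten. Achievability is essentially mechanical once a layout is written down, being a matter of verifying $|\cup_{i\in\mathcal{T}_1}L_i\cup_{j\in\mathcal{T}_2}Z_j|\ge (n_1-k)+\ell$ over all admissible $\mathcal{T}_1,\mathcal{T}_2$; but the lower bound has to use the Claim to pin the phase-1 load, use Lemma~\ref{lem:two_span_b} simultaneously across \emph{all} $M\in\mathcal{M}$ to constrain the phase-2 loads, and rule out partition refinements that trade a larger $Q$ for a smaller relative weight. I would isolate this last point as a separate lemma bounding from below, in any feasible layout, the size of a union of $\ell$ supports, and then show that the three construction families above collectively meet that bound, so that the overall optimum equals the $\max$ of the three achievable values of $G$, namely $G_i$.
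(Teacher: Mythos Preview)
Your achievability outline is broadly on the right track --- explicit support layouts plus Condition~\ref{eq:supp_cond_tgc} is exactly how the paper proceeds, with a two-block fractional-repetition pattern for the $n_1\le 2(k-1)$ regimes and cyclic-window patterns (Constructions~\ref{constr:cyclic_gen_1}, \ref{constr:cyclic_spec_k_n1_equal Cn1}, \ref{constr:cyclic_gen_c1}) for the large-$n_1$ regimes, combined via the parameter search that produces $p^\ast$, $n^{+}$, $n_{min}$ and hence $G_i$.  So the constructive half of your plan matches the paper's argument, even if stated more abstractly than the paper's very explicit row-by-row support tables.

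The substantive gap is that you are reading the theorem as an optimality statement and therefore planning a lower-bound campaign that the paper neither attempts nor claims.  Theorem~\ref{thm: main results} is purely an achievability result about the \emph{proposed} tiered codes: for each parameter range the stated load is what the paper's construction attains, and the proof is complete once the construction is shown to satisfy Condition~\ref{eq:supp_cond_tgc}.  Nowhere does the paper argue that $G_i$ is the largest feasible window reduction over all codes, that $1/2$ is optimal in Case~2, or that $2(k-1)/k^2$ is optimal in Case~1; the only optimality remark is the trivial one that when $n_2\le n_1+C_{n_1}$ the load $(n_1-k+1)/n_1$ matches the $(n_1,k)$ lower bound inherited from \cite{tandon2017gradient} via the Claim.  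Your proposed lower bound for Case~2, for instance, would not go through as written: the Claim forces $F$ to be an $(n_1,k)$ gradient code, whose known lower bound is $(n_1-k+1)/n_1$, not $1/2$, so ``load below $1/2$ forces a $k$-subset inside one block'' presupposes the two-block structure rather than deriving it.  Drop the lower-bound program entirely; the ``main obstacle'' you anticipate does not exist, and what remains is the (admittedly case-heavy) verification that each explicit layout satisfies Condition~\ref{eq:supp_cond_tgc}.  (Minor: there are seven cases, not eight.)
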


\begin{figure}[hbt!]
	\centering
	\includegraphics[scale=0.3]{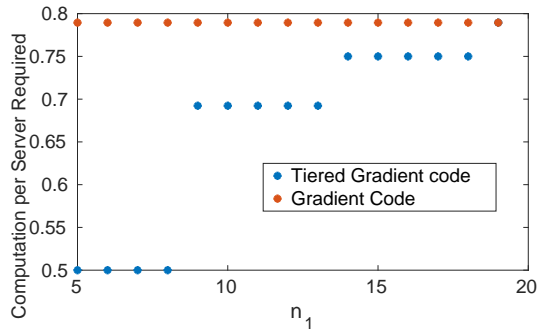}
	\caption{The amount of computation per server required as a function of $n_1$ when we vary $n_1 \in [k,n_2]$ for $n_2=19,c=1$ and $k=5$. }
	\label{fig: gain plot}
\end{figure}
We note that the amount of computation per server is $\frac{n_2-k+1}{n_2}$ for the $(n_2,k)$ gradient code, and the proposed tiered gradient codes reduce this amount to that in the statement of Theorem \ref{thm: main results} due to the flexibility of delayed start of $n_2-n_1$ servers when $c$ have finished computation. Figure \ref{fig: gain plot} illustrates the savings in the amount of computation per server for tiered gradient codes as compared to the gradient codes for $c=1$. We note that as the value of $n_1$ increase, the amount of computation per server is non-decreasing. This is because for smaller $n_1$, one possibility is to use the code construction of larger $n_1$ and only start the required number of servers initially and rest can be started when $c$ servers complete. Thus, a code for larger $n_1$ can be used for smaller $n_1$ providing more savings as $n_1$ decreases.

For all the cases discussed, the tiered gradient coding performs better than the gradient coding in terms of the amount of computations per server required. We provide the code constructions for line 2 in Section \ref{line2}, line 3 in Sections \ref{line3-1} and \ref{line3-2}, line 1 in Section \ref{line1}, and line 4 in Section \ref{sec: c1 mid range}.  Further, Section \ref{sec: c>1} provides the code construction for $c>1$ (lines 5-7).

%

%

\subsection{Tiered Gradient Codes for  $k \le n_1  \leq 2(k-1)$ }\label{line2}

In this subsection, we assume that $k \le n_1  \leq 2(k-1)$ and provide a construction of tiered gradient codes based on fractional repetition gradient codes.  The goal is to design tiered gradient codes which perform smaller computation per server than that is demanded by an $(n_2,k)$ gradient code. Here we assume that $n_2 > 2(k-1)$.


\begin{constr} \label{constr:frac1}
	We pick  $p  \triangleq 2(k-1)-n_1$. Let $Q = 2(k-1)$. 
	Consider the matrices $F_1$ and $F_2$, where $
	F_1 =   1_{\lceil \frac{n_1}{2} \rceil \times (\frac{Q}{2})}  	$ and $F_2  =   1_{\lfloor \frac{n_1}{2} \rfloor \times (\frac{Q}{2})} $. 	The matrix $F$ corresponding to the tiered gradient code is given by $ 	F = \begin{bmatrix} F_1 & 0 \\ 0 & F_2 \end{bmatrix} $. The matrix $B_M$ is as follows:
	\begin{itemize}
		\item If $M \subseteq \{1, \ldots, \lceil \frac{n_1}{2} \rceil \}$, then, $
		B_M = \begin{bmatrix}  0 & 1_{(n_2 - n_1) \times   (\frac{Q}{2})}  \end{bmatrix}$
		\item For all other possibilities of $M$, we set, $
		B_M = \begin{bmatrix}  1_{(n_2 - n_1) \times   (\frac{Q}{2})}  & 0 \end{bmatrix} $,
	\end{itemize}
	where $1_{A\times B}$ is $A\times B$ matrix with all entries as $1$. 
\end{constr}

\begin{thm}
	The code given in Construction \ref{constr:frac1} is a $(n_1,n_2,k,c)$ tiered gradient code where $k \le n_1  \leq 2(k-1)$.
\end{thm}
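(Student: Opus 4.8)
The plan is to verify that the code from Construction~\ref{constr:frac1} satisfies the span condition of Lemma~\ref{lem:two_span_b}, which by the discussion after Condition~\ref{eq:supp_cond_tgc} reduces to checking that the requisite $1_{1\times Q}$ lies in the appropriate span for every choice of $M$, $I_1$, $I_2$. Since all the nonzero entries of $F$ and $B_M$ are $1$'s (the matrices are genuinely all-ones blocks, not randomized), I can argue directly rather than through the support condition. First I would note that $Q = 2(k-1)$ is split into two halves: the ``left'' half (columns $1,\dots,\frac{Q}{2}$) and the ``right'' half. The row $\bold{f_i}$ of $F$ is the left-half indicator if $i \le \lceil n_1/2\rceil$ and the right-half indicator otherwise; similarly $B_M$ rows are all equal to one of the two half-indicators depending on whether $M$ is entirely contained in the first $\lceil n_1/2\rceil$ servers. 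The key arithmetic fact is that $\lceil n_1/2\rceil + \lfloor n_1/2\rfloor = n_1$ and, because $n_1 \le 2(k-1)$, we have $\lceil n_1/2\rceil \le k-1$ and $\lfloor n_1/2\rfloor \le k-1$.

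Next I would set up the verification of Lemma~\ref{lem:two_span_b}. Fix $M \in \mathcal{M}$ (here $|M| = c = 1$, though the same argument works for the general $c$ version if one wanted), $I_1 \subseteq [n_1]\setminus M$, $I_2 \subseteq [n_2-n_1]$ with $|I_1 \cup I_2| = k - c$. I must show $1_{1\times Q}$ is in the span of $\{\bold{f_i}: i \in M\cup I_1\} \cup \{\bold{b_i}: i\in I_2\}$. To get the all-ones vector over all $Q = 2(k-1)$ coordinates from these half-indicator vectors, it suffices to have at least one vector supported on the left half and at least one supported on the right half among the chosen rows. So the whole proof comes down to a counting argument showing that the selected $k$ servers (the $c$ from $M$, the $|I_1|$ from the first phase, and the $|I_2|$ from the second phase) can never all be ``left-type'' and can never all be ``right-type.''

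The core of the argument, and the step I expect to be the main obstacle, is this case analysis on the type of $M$. \textbf{Case A: $M \subseteq \{1,\dots,\lceil n_1/2\rceil\}$.} Then $B_M$ is the right-half block, so every $\bold{b_i}$, $i\in I_2$, covers the right half; thus I only need some chosen $F$-row covering the left half, OR $I_2 \neq \emptyset$ together with some $F$-row covering the right half — wait, more carefully: if $I_2\neq\emptyset$ I already have the right half, and $M$ itself gives me the left half, so I'm done whenever $I_2\neq\emptyset$. If $I_2 = \emptyset$, then $|M\cup I_1| = k$ first-phase servers, and since there are only $\lceil n_1/2\rceil \le k-1$ left-type first-phase servers and only $\lfloor n_1/2\rfloor \le k-1$ right-type ones, $k$ of them cannot all lie in either group, so both halves are covered. \textbf{Case B: $M \not\subseteq \{1,\dots,\lceil n_1/2\rceil\}$}, i.e. the single server in $M$ is right-type; then $B_M$ is the left-half block, so every $\bold{b_i}$ covers the left half, and $M$ covers the right half, so again $I_2\neq\emptyset$ finishes immediately; if $I_2=\emptyset$ the same pigeonhole on the $k$ first-phase servers applies. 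I would write this up compactly by observing in both subcases that either $I_2$ is nonempty (and then $M$ plus any second-phase row already span both halves) or $I_2$ is empty (and then $k$ first-phase rows cannot avoid some left-type and some right-type server because each class has size at most $k-1$). Finally I would remark that the amount of computation per server is $\frac{Q/2}{Q} = \frac{1}{2}$, consistent with line~2 of Theorem~\ref{thm: main results}, and that the $(n_2,k)$ gradient code requires $\frac{n_2-k+1}{n_2} > \frac{1}{2}$ since $n_2 > 2(k-1)$, so strict improvement holds.
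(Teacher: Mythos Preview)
Your proposal is correct and follows essentially the same approach as the paper: verify the span condition of Lemma~\ref{lem:two_span_b} by splitting into the case $I_2\neq\emptyset$ (where $M$ covers one half and any row of $B_M$ covers the other) and the case $I_2=\emptyset$ (where pigeonhole on $\lceil n_1/2\rceil,\lfloor n_1/2\rfloor\le k-1$ forces $k$ first-phase rows to hit both halves). The paper's proof is just a terser version of exactly this argument.
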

\begin{proof}
	To prove the theorem, we will check that the code satisfies the span condition given in Lemma \ref{lem:two_span_b}. It is clear that whenever $I_2 \neq \phi$, span condition is satisfied.
	Consider the case when $I_2 = \phi$. In this case, we need that $1_{1 \times Q}$ to lie in the span of any $k$ rows of the $F$ matrix. Since $k \geq \frac{n_1}{2} + 1 > \lceil \frac{n_1}{2} \rceil$, we will have at least one row each from the two types of rows (corresponding to the $F_1$ and $F_2$ matrices) and hence the span condition is satisfied.
\end{proof}

\textit{The proof of Theorem \ref{thm: main results} for $k \le n_1 \leq 2(k-1)$ is as follows.}

The number of samples per partition is $\frac{d}{Q}$.
The computation per server for the $(n_2,k)$ gradient code is proportional to $\frac{d }{n_2} (n_2-k+1)$. The computation per server for the tiered gradient code given in Construction  \ref{constr:frac1} is proportional to $\frac{d }{n_1+p} (n_1+p-k+1)$.
The theorem follows by noting that when $n_1 + p < n_2$,
\begin{equation*}
\frac{1}{2} = \frac{n_1+p-k+1}{n_1+p} < \frac{n_2-k+1}{n_2}.
\end{equation*}

\begin{example}
	As an example, let $n_1=7$, $k=5$, $c=1$, and $n_2=10$. In this case, the division of data is into $8$ partitions $\{D_1,D_2,...,D_8\}$. The first four servers (of the $n_1$ servers) compute the sum of gradients of the first 4 partitions ($D_1,D_2,D_3,D_4$), and the next three servers (of the $n_1$ servers) compute the sum of gradients of the last four partitions ($D_5,D_6,D_7,D_8$). 
	Suppose that server $1$ finishes the computation, three (=$n_2-n_1$) more servers are launched. These three servers compute the sum of gradients of the last four partitions ($D_5,D_6,D_7,D_8$). With the server $1$ results available which provides the sum of gradients of  the first 4 partitions, any $4$ of the remaining $9$ servers will provide the sum of gradients of the  last four partitions, thus giving the overall computation result. 
	Each server performs a computation on $4$ out of $8$ partitions, and thus on $1/2$ of the data. This is in contrast to each server performing computation on $6/10$ of the data in case of the $(n_2,k)$ gradient code. As $n_2$ increases, the relative improvement of the tiered gradient codes increases. 
\end{example}


\subsection{Tiered  Gradient Codes for $n_1  \geq 3(k-1)$}\label{line3-1}

In this section, we construct tiered gradient codes in which a subset of servers under consideration will be allocated a cyclic repetition gradient code of suitable parameters. We will construct codes for all values of $n_2$,  with $n_1-k+1$ computations per server. We note that this is the best possible, since the gradient code restricted to first $n_1$ servers has to be  an $(n_1,k)$ gradient code and $n_1-k +1$ is the lower bound on the number of computations per server of an $(n_1,k)$ gradient code \cite{tandon2017gradient}. We make the following observation with respect to the condition which the tiered gradient code has to necessarily satisfy. These will in turn be used to construct certain tiered gradient codes. 

\begin{lem} \label{lem:supp_cond}
	Consider an $(n_1, n_2, k,c=1)$ tiered gradient code. Suppose $Q=n_1$ and the support of the $F$ matrix is picked as those given by the cyclic repetition gradient code. Let the $m^{\text{th}}$ server finish its job in the first phase, i.e. $M = \{m\}$. Let $L_m$ denote the support of $\bold{f}_m$, and $Z_j$ denote the support of $\bold{b}_j$ ($\bold{b}_j$ is the $j^{\text{th}}$ row of matrix $B_M$). Then, the following holds: 
	$
	[ 0 , n_1-1] \setminus L_m \subset Z_j, \forall j \in [n_2-n_1].
	$
\end{lem}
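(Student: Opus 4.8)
We have an $(n_1, n_2, k, c=1)$ tiered gradient code with $Q = n_1$, where the support of $F$ is given by the cyclic repetition structure, so $L_i = [i-1, i+(n_1-k-1)] \bmod n_1$, each of size $n_1 - k + 1$. One server $m$ finishes in phase one, so $M = \{m\}$. The claim is that every row $\bold{b}_j$ of $B_M$ must have support containing the complement $[0, n_1-1] \setminus L_m$, a set of size $k - 1$.

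**Proof plan.** The plan is to invoke the span condition of Lemma~\ref{lem:two_span_b} for a carefully chosen small straggler configuration and argue by a dimension/support count. Fix any $j \in [n_2 - n_1]$. Take $I_2 = \{j\}$ and choose $I_1 \subseteq [n_1] \setminus \{m\}$ with $|I_1| = k - 2$, so that $|I_1 \cup I_2| = k - 1 = k - c$, which is exactly the size required by Lemma~\ref{lem:two_span_b}. The lemma then forces
\[
1_{1 \times Q} \in \operatorname{span}\{\bold{f}_m\} \cup \{\bold{f}_i : i \in I_1\} \cup \{\bold{b}_j\}.
\]
The supports of $\bold{f}_m$ and the $\bold{f}_i$, $i \in I_1$, together cover at most $|L_m \cup \bigcup_{i \in I_1} L_i|$ coordinates. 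Since the $F$-support is cyclic with each $L_i$ of size $n_1 - k + 1$, and we are choosing only $1 + (k-2) = k - 1$ of these rows, the union of these $k-1$ supports omits at least one coordinate of $[0, n_1 - 1]$ — in fact, by choosing $I_1$ appropriately, I can make the union miss any prescribed coordinate outside $L_m$. The point is: for the all-ones vector to lie in the span, every coordinate must be covered by the union of supports of the spanning rows; any coordinate not in $\bigcup_{i \in \{m\} \cup I_1} L_i$ must therefore lie in $Z_j = \operatorname{supp}(\bold{b}_j)$.

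**Carrying it out.** So the key step is: for an arbitrary coordinate $t \in [0, n_1-1] \setminus L_m$, exhibit a choice of $I_1 \subseteq [n_1] \setminus \{m\}$ with $|I_1| = k - 2$ such that $t \notin L_i$ for all $i \in I_1$, i.e. $t \notin \bigcup_{i \in \{m\} \cup I_1} L_i$. Since each $L_i$ has size $n_1 - k + 1$, the coordinate $t$ belongs to exactly $n_1 - k + 1$ of the $n_1$ sets $L_1, \ldots, L_{n_1}$ (by the cyclic/shift structure). Hence $t$ is avoided by $n_1 - (n_1 - k + 1) = k - 1$ of the $L_i$'s. We already know $m$ is one of these $k-1$ indices (since $t \notin L_m$); the remaining $k - 2$ such indices lie in $[n_1] \setminus \{m\}$, and we take $I_1$ to be exactly that set. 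Then the union $\bigcup_{i \in \{m\} \cup I_1} L_i$ does not contain $t$. Applying Lemma~\ref{lem:two_span_b} to this configuration, the all-ones vector is in the span of $\{\bold{f}_i : i \in \{m\} \cup I_1\} \cup \{\bold{b}_j\}$; since coordinate $t$ of the all-ones vector is $1 \neq 0$ but all the $\bold{f}_i$ in our collection vanish at $t$, we must have $\bold{b}_j$ nonzero at coordinate $t$, i.e. $t \in Z_j$. As $t$ was an arbitrary element of $[0, n_1-1] \setminus L_m$ and $j$ arbitrary, this gives $[0, n_1 - 1] \setminus L_m \subseteq Z_j$ for all $j \in [n_2 - n_1]$.

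**Main obstacle.** The only delicate point is the combinatorial counting that each coordinate $t$ lies in exactly $n_1 - k + 1$ of the cyclic supports $L_1, \ldots, L_{n_1}$, and hence is avoided by exactly $k - 1$ of them including $m$; this follows directly from the translation-invariance of the support pattern in Equation~\eqref{eq:suppf} (each residue class appears in a window of length $n_1 - k + 1$ starting at each of $n_1$ positions). One should also double-check the edge case where $I_1$ as constructed might accidentally be forced to be smaller than $k-2$ — but since $m$ is one of exactly $k-1$ indices avoiding $t$, the set $[n_1]\setminus\{m\}$ still contains $k-2$ of them, so $|I_1| = k - 2$ is always attainable. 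No further calculation is needed beyond invoking the already-established span condition.
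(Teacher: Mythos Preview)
Your proof is correct and follows essentially the same argument as the paper. The paper phrases it as a proof by contradiction (assume some $r \in [0,n_1-1]\setminus L_m$ is missing from $Z_j$, observe that exactly $k-1$ rows of $F$ including row $m$ vanish at column $r$, adjoin $\bold{b}_j$, and contradict the span condition), whereas you run the same idea directly; your explicit counting that each coordinate is avoided by exactly $k-1$ of the cyclic supports is the same fact the paper uses when it says ``there are $k-1$ rows in the $F$ matrix including $m$ where the $r^{\text{th}}$ column is zero.''
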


\begin{proof}
	Suppose not. Consider $r \in \{ 0 , 1, \ldots, n_1-1\} \setminus L_m$ and $r \notin Z_j$ for some $j \in [n_2-n_1]$. Based on the structure of the cyclic repetition gradient code, there are $k-1$ rows in the $F$ matrix including $m$ where the $r^{\text{th}}$ column is zero. Considering these $k-1$ rows and adding the $j^{\text{th}}$ row of the $B_M$ matrix, we have a set of $k$ rows which is required to satisfy the span property. However, since the $r^{\text{th}}$ coordinate is zero in all these rows, $1_{1 \times Q}$ cannot be in the span of these $k$ rows. Hence, Lemma \ref{lem:supp_cond} is necessary for the code to be $(n_1,n_2,k,c=1)$ gradient code.
\end{proof}

We consider $Q=n_1$ and $C_{n_1}= \lfloor \frac{n_1-k+1}{k-1} \rfloor$. Initially, the first $n_1$ servers are launched. We assume (without loss of generality) that the server $1$ has finished the job in the first phase. Then the remaining $n_2-n_1$ servers are launched. 
We will now construct codes for the case where $n_2 = n_1 + C_{n_1}$. Let $B_M$ be a $C_{n_1} \times n_1$ matrix with $\bold{b}_i$ representing the $i^{th}$ row and $Z_{i}$ representing the support of $\bold{b}_i$, where $i \in [C_{n_1}]$. Let the columns of the $F$ and $B_M$ matrices be indexed by $[0,n_1-1]$.
\begin{constr}[$n_2 = n_1 + C_{n_1}$] \label{constr:cyclic_gen_1}
	The support structure of the matrix $F$ is as follows:
	\begin{equation*}
	supp(\bold{f_i}) = [i-1,i+(n_1-k-1)] \mod n_1.
	\end{equation*}
	
	The procedure to design the support of each row of the $B_M$ matrix is as follows.
	
	If $C_{n_1} = 1$, the $B_M$ matrix is a row matrix. The $k-1$ coordinates of $Z_1$ are given by $	[n_1-k+1,n_1-1]  \subset Z_1.$ 	We pick the remaining $n_1 - 2(k-1)$ coordinates as a subset of $L_1 = [0, n-k]$ such that at least one of every pair of consecutive coordinates (modulo $n_1$) is present in the set. 
	
	If $C_{n_1} > 1$, do the following. Let  $
	l=(n_1-k+1) - (k-1)C_{n_1}.$ The $l+k-1$ coordinates of $Z_j, j \in [C_{n_1}]$, are given by $
	[n_1-(l+k-1),n_1-1]  \subset Z_j, j=[C_{n_1}]. $ 	Let  
	\[
	B_{M}=
	\begin{bmatrix}
	B_{M_1} & B_{M_2} & \ldots &  B_{M_{k-1} } & B_{M'}
	\end{bmatrix}
	\]
	Each submatrix $B_{M_j}$, where $j \in [k-1]$, is of size $C_{n_1} \times C_{n_1}$ and the $B_{M'}$ matrix is of size $C_{n_1} \times (l+k-1)$. The $B_{M'}$ matrix constitutes the $l+k-1$ columns - $ [n_1-(l+k-1),n_1-1]$ of the $B_M$ matrix filled with non zero entries. The support structure for the remaining coordinates of the $B_M$ matrix is obtained from the design of the support structure corresponding to the matrices $B_{M_j}, j \in [k-1]$. The support of the $i^{th}$ row of each matrix $B_{M_j}$, where $j \in [k-1]$, is of the form $[i-1, i+C_{n_1}-3] $ mod $C_{n_1}$.

	Now, we will present a randomized construction of the matrices $F$ and $B_M$. The matrix $H$ of size $(n_1-k) \times n_1$ is picked at random as given in Section \ref{sec:rev_cyclic}.
	Each vector $\bold{f_i}$ is calculated by solving the following equation
	\begin{equation*}
	\bold{f_i}|_{L_i } H(:,L_i)^T = 0,
	\end{equation*}
	$\bold{b}_i$ is calculated by solving the following equation
	\begin{equation*}
	\bold{b}_i|_{Z_i } H(:,Z_i)^T = 0.
	\end{equation*}
\end{constr}

\begin{proof}
	Initially, the first $n_1$ servers are launched. Without loss of generality, let us assume that the server $1$ finishes the job first. Then the remaining $n_2 - n_1$ servers are launched. The procedure to design the support of each row of the $B_M$ matrix is as follows. It is necessary that $|Z_{j}| = n_1-k+1$, for each $j \in [C_{n_1}]$. From Lemma \ref{lem:supp_cond}, we have
	$
	[0, n_1-1] \setminus L_1  \subset Z_j, \ \ j=[C_{n_1}]
	$ and
	$
	|[ 0, n_1-1] \setminus L_1| = k-1.
	$
	Thus $k-1$ coordinates are included in each $Z_j,j=[C_{n_1}]$. We have to add exactly $|Z_j| - (k-1) = n_1 - 2(k-1) > 0$ coordinates from $L_1$ to the set to complete the specification of $Z_j$.  We pick these $n_1 - 2(k-1)$  coordinates as a subset of $L_1 = [0, n_1-k]$ such that at least one of every pair of consecutive coordinates (modulo $n_1$) is present in the set. We will refer to this condition as consecutive coordinate property. It is possible to pick $n_1 - 2(k-1)$ coordinates satisfying the consecutive coordinate property only if $n_1 - 2(k-1) \geq \lfloor \frac{n_1 -k+1}{2} \rfloor$. We can easily see that the above property is satisfied when $n_1 \geq 3(k-1)$. In addition, these are also picked so that $|Z_j \cup Z_i| = n_1$, for any $j,i \in [C_{n_1}]$.

	The $l$ coordinates from  $L_1$ - $[(n_1-k) -(l-1),(n_1-k)]$ are also included in $Z_j$. Thus, totally, $l+k-1$ coordinates are included in each $Z_j$. We have to add $|Z_j|-(l+k-1) = n_1-2(k-1)-l$ more coordinates to $Z_j$ from $L_1 \setminus [(n_1-k) -(l-1), (n_1-k) ] = [0, n_1-k -l]$. That is, we need to pick $n_1-2(k-1)-l$ from $n_1-k+1-l$ locations available.

	Let $B_{M'}$ be the matrix obtained by taking the $l+k-1$ coordinates corresponding to $ [n_1-(l+k-1),n_1-1 ]$ from each row in the $B_M$ matrix, i.e., by taking the last $l+k-1$ columns -$ [n_1-(l+k-1),(n_1-1) ]$ from the $B_M$ matrix. 
	$B_{M_1}$ constitutes the first $C_{n_1}$ columns of the $B_{M}$ matrix, $B_{M_2}$ constitutes the next $C_{n_1}$ columns and so on.
	Hence, each $B_{M_j}$, $j \in [k-1]$, is a $C_{n_1} \times C_{n_1}$ matrix which is obtained by taking distinct and consecutive $C_{n_1}$ columns from the $B_M$ matrix sequentially. $n_1-2(k-1)-l$ more coordinates to be added to $Z_j,j=[C_{n_1}]$ is obtained from the design of the support structure corresponding to the matrices $B_{M_j}, j \in [k-1]$.
	We have the support structure of the $B_{M'}$ matrix. The support structure for the remaining coordinates of the $B_M$ matrix is obtained from the design of the support structure corresponding to the matrices $B_{M_j}, j \in [k-1]$.
	
	The support of the $i^{th}$ row of each matrix $B_{M_j}$, where $j \in [k-1]$, is of the form $[i-1, i+C_{n_1}-3]$ mod $C_{n_1}$. The cardinality of the support of each row of the $B_{M_j}$ matrix is $C_{n_1}-1$, i.e, there is exactly one zero in each row of the $B_{M_j}$ matrix at disjoint locations. Hence, the number of zeros in each row of the $B_{M}$ matrix is exactly $k-1$, which is exactly what we needed. 
	The cardinality of the support of union of any two rows of  the $B_{M_j}$ matrix is $C_{n_1}$. Hence if we take union of support of any two rows in the $B_{M}$ matrix, then it has cardinality $n_1$. That is, $|Z_{r} \cup Z_{s}| = n_1$, for any $r,s \in [C_{n_1}]$. Hence the support structure of the $B_M $ matrix satisfies all the required conditions.
\end{proof}

\begin{thm} \label{thm:cyclic1}
	The code given in Construction \ref{constr:cyclic_gen_1} is a $(n_1,n_2=n_1+C_{n_1},k,c=1)$ tiered gradient code where $n_1 \geq 3(k-1)$ and $C_{n_1}=\frac{n_1-k+1}{k-1}$.
\end{thm}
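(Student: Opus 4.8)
The plan is to verify the span condition of Lemma~\ref{lem:two_span_b}. Because $F$ and every $B_M$ are obtained by the randomized recipe of Section~\ref{sec:rev_cyclic} against one common Gaussian matrix $H$, it suffices to check that the chosen supports $\{L_i\}$ and $\{Z_j\}$ satisfy the support Condition~\ref{eq:supp_cond_tgc}; the passage from the support condition to the span condition is then word-for-word the argument in the proof of Lemma~3 of \cite{tandon2017gradient}. By symmetry take $M=\{1\}$, so $L_1=[0,n_1-k]$. The case $\mathcal{T}_2=\emptyset$ is immediate: there Condition~\ref{eq:supp_cond_tgc} is Condition~\ref{supp_cond_gradient} (with $n_1$ in the role of $n_2$) for the cyclic repetition $(n_1,k)$ gradient code $F$, equivalently it is the content of the Claim, which holds by \cite{tandon2017gradient}. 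So everything reduces to $\mathcal{T}_2\neq\emptyset$, and I would feed in the structural facts about the $Z_j$ that were established while constructing Construction~\ref{constr:cyclic_gen_1}: $|Z_j|=n_1-k+1$; the complements $\overline{Z_j}:=[0,n_1-1]\setminus Z_j$ are pairwise disjoint (this is the statement $|Z_r\cup Z_s|=n_1$); and, by Lemma~\ref{lem:supp_cond} together with the consecutive-coordinate property, $\overline{Z_j}\subseteq L_1$, $\overline{Z_j}$ contains no two cyclically consecutive coordinates of $[0,n_1-1]$, and $|\overline{Z_j}|=k-1$.

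With these facts I would split $\mathcal{T}_2\neq\emptyset$ into three subcases. First, if $|\mathcal{T}_2|\ge 2$, pairwise disjointness of the $\overline{Z_j}$ forces $\bigcap_{j\in\mathcal{T}_2}\overline{Z_j}=\emptyset$, hence $\bigl|\bigcup_{j\in\mathcal{T}_2}Z_j\bigr|=n_1\ge(n_1-k)+\ell$ because $\ell\le k$. Second, if $|\mathcal{T}_2|=1$ and $1\in\mathcal{T}_1$, then $\bigcup_{i\in\mathcal{T}_1}L_i\supseteq L_1\supseteq\overline{Z_j}$, so the union in Condition~\ref{eq:supp_cond_tgc} is all of $[0,n_1-1]$, again of size $n_1$. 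The only substantive subcase is $\mathcal{T}_2=\{j\}$ with $1\notin\mathcal{T}_1$; here $\mathcal{T}_2\subseteq I_2$ gives $|I_2|\ge1$, and $|I_1|+|I_2|=k-1$, so $t:=|\mathcal{T}_1|\le|I_1|\le k-2$ and $\ell=t+1$. Using $\overline{Z_j\cup\bigcup_{i\in\mathcal{T}_1}L_i}=\overline{Z_j}\cap\bigcap_{i\in\mathcal{T}_1}\overline{L_i}$, the target inequality $\bigl|Z_j\cup\bigcup_{i}L_i\bigr|\ge(n_1-k)+t+1$ becomes $\bigl|\overline{Z_j}\cap\bigcap_{i\in\mathcal{T}_1}\overline{L_i}\bigr|\le k-1-t$.

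To finish, I would prove a short arc lemma: each $\overline{L_i}=[i-k,i-2]\bmod n_1$ is a cyclic arc of length $k-1$, so if some $x$ lies in $\overline{L_i}$ for every $i\in\mathcal{T}_1$ then all those indices lie in the $(k-1)$-element window $[x+2,x+k]$; hence the $t$ distinct indices of $\mathcal{T}_1$ span at most $k-2$ positions, and $\bigcap_{i\in\mathcal{T}_1}\overline{L_i}$ is empty or a single cyclic arc of length at most $m:=k-t\ge 2$ (the hypothesis $n_1\ge 3(k-1)>k-1$ keeps these windows from wrapping around). A set with no two cyclically consecutive elements meets any block of $m$ consecutive coordinates in at most $\lceil m/2\rceil$ of them, and $\lceil m/2\rceil\le m-1=k-1-t$ exactly because $m\ge2$; this is the required bound, and the $t=0$ endpoint just gives $|Z_j|=n_1-k+1$. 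I expect this arc-counting step to be the only real obstacle: one must make the distinctness of the shifts (each new $\overline{L_i}$ trims the running intersection by at least one position), the bound $t\le k-2$ that is forced by $|\mathcal{T}_2|=1$, and the no-two-consecutive property of $\overline{Z_j}$ (itself relying on $n_1\ge 3(k-1)$, the hypothesis that makes the consecutive-coordinate property feasible in Construction~\ref{constr:cyclic_gen_1}) fit together precisely, and to check the small-$m$ endpoints directly rather than with crude estimates.
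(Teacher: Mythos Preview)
Your proposal is correct and follows essentially the same route as the paper: verify Condition~\ref{eq:supp_cond_tgc} by the same case split on $|\mathcal{T}_2|$, invoke the cyclic structure of $F$ for $\mathcal{T}_2=\emptyset$, use $L_1\cup Z_j=[0,n_1-1]$ when $1\in\mathcal{T}_1$, use $Z_r\cup Z_s=[0,n_1-1]$ when $|\mathcal{T}_2|\ge 2$, and appeal to the consecutive--coordinate property of $Z_j$ in the remaining subcase. The only difference is that the paper treats the subcase $|\mathcal{T}_2|=1$, $1\notin\mathcal{T}_1$ only at the extremal point $|\mathcal{T}_1|=k-2$ with $|\cup_{i\in\mathcal{T}_1}L_i|=n_1-2$, whereas your arc--counting lemma handles every $t\le k-2$ uniformly; this makes your argument strictly more complete, while remaining the same in spirit.
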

\begin{proof}
	We have to show that Condition \ref{eq:supp_cond_tgc} is satisfied by the code in Construction \ref{constr:cyclic_gen_1} with probability 1. Here $M=\{1\}$, assuming that the server $1$ finished its task first. 	If $\mathcal{T}_2 = \phi$, the above condition follows from the support structure of the cyclic repetition code. If $|\mathcal{T}_2| = 1$ and $M \in \mathcal{T}_1$, then the support of the union of $\mathcal{T}_1$ and $\mathcal{T}_2$ is $[n_1]$ and hence Condition \ref{eq:supp_cond_tgc} is satisfied.
	Now, we will consider the case when $|\mathcal{T}_2| = 1$, $M \notin \mathcal{T}_1$, $|\mathcal{T}_1| = k-2$ and $\mathcal{T}_1$ is such that $
	|\cup_{i \in \mathcal{T}_1} L_i | = n_1 - k + (k-2) = n_1-2.
	$
	Based on the cyclic support structure of the $F$ matrix, the above condition is true whenever $(k-2)$ consecutive rows (modulo $n_1$) are picked. Hence, the two coordinates which are not included in the union are consecutive. Since the support of rows in $B_M$ matrix satisfies consecutive coordinate property, at least one of the coordinates of the two coordinates which are not picked up before will be included after adding the new row. So we have, $
	|\cup_{i \in \mathcal{T}_1} L_i \cup_{j \in \mathcal{T}_2} Z_j | \geq (n_1-1).$ 	
	Hence Condition \ref{eq:supp_cond_tgc} is satisfied.
	
	For the cases when $|\mathcal{T}_2| \geq 2$, since $Z_i$ and $Z_j$ are chosen such that $Z_i \cup Z_j = [n_1]$, for any $i,j \in [C_{n_1}]$, we have that the condition \ref{eq:supp_cond_tgc} being trivially satisfied. 
	Thus, Condition \ref{eq:supp_cond_tgc} is satisfied for all cases and hence the code is a $(n_1,n_2=n_1+C_{n_1},k,c=1)$ tiered gradient code.
\end{proof}

\begin{example} {\label{exmp: 1}}
	Let $n_1=9,k=3,c=1$ and $n_2=12.$ We split data into $9$ partitions -$\{x_0,x_1,\ldots , x_8\}$. The server $i$ is assigned data $\{x_j, j \in [i-1, i+5]\}$. Each server computes the gradients on their respective data. We assume that server $1$ finishes its task first when $n_1$ servers are launched. After that the remaining $n_2-n_1=3$ servers are launched. Since server $1$ doesn't have $\{x_7,x_8\}$ as its content, we have to include  $\{x_7,x_8\}$ in the content of three added servers. Here, $l=1$. Hence, $x_6$ needs to be included in the content of three added servers. The last three columns of the $B_M$ matrix are filled with non zero entries.  Let $
	B_{M}=
	\begin{bmatrix}
	B_{M_1} & B_{M_2} & B_{M'}
	\end{bmatrix}
	$. 	The $B_{M'}$ matrix is obtained by taking the last three columns of $B_M$. Hence it is a $3 \times 3 $ matrix.
	Both the $B_{M_1}$ and $B_{M_2}$ matrices are $3 \times 3$ matrices. $B_{M_1}$ is the submatrix formed by the first three columns of the $B_{M}$ matrix and $B_{M_2}$ is formed by the next three columns of the $B_{M}$ matrix.  The support of the $i^{th}$ row of each of the matrices $B_{M_1}$ and $B_{M_2}$ is of the form $\{i-1,i\}$ mod $3$. Hence the structures of $B_{M_1},B_{M_2} $ and $B_M$ are of the form $
	B_{M_1}= B_{M_2}=
	\begin{bmatrix}
	* & 0 & *  \\
	* & * & 0  \\
	0 & * & *  \\
	\end{bmatrix}
	$ and $
	B_{M}=
	\begin{bmatrix}
	* & 0 & * & * & 0 & * & * & * & *\\
	* & * & 0 & * & * & 0  & * & * & *\\
	0 & * & * & 0 & * & *  & * & * & *\\
	\end{bmatrix}
	$. 
	The symbol $*$ in the above matrices implies non zero entries in those locations. 
	The content of the three added servers are $\{x_0,x_2,x_3,x_5,x_6,x_7,x_8\},\{x_0,x_1,x_3,x_4,x_6,x_7,x_8\}$ and $\{x_1,x_2,x_4,x_5,x_6,x_7,x_8\}$ respectively. Out of the $8$ servers which haven't finished the job earlier and the three added servers, any two servers can give the sum of the gradients along with server $1$. Each server does $7/9$ computations compared to $10/12$ required for the $(n_2,k)$ gradient code.
\end{example}


%
\begin{note}
	If $n_1 <  n_2 < n_1 + C_{n_1}$, we take the support structure of any $n_2-n_1$ rows of the $B_M$ matrix constructed using Construction \ref{constr:cyclic_gen_1} ($ n_2 = n_1 +C_{n_1}$) to generate the support structure for the $B_M$ matrix in this case. The support structure of the matrix $F$, the construction of the $B_M$ and $F$ matrices using the above support structures are same as in Construction \ref{constr:cyclic_gen_1}.
\end{note}

\subsection{Tiered Gradient Codes for $n_1=k,n_2-n_1=1$ (even $k$)} \label{line1}

\label{sec: n-=k}
In this section, we provide tiered gradient codes for $n_1 = k, n_2-n_1=1$, where $k$ is even. The computation per server required is $\frac{2(k-1)}{k^2}$. 
Let $t=k-1$. We split the data into $\frac{k^2}{2}$ partitions. Each user is assigned $t$ partitions of data. The code construction is as follows.
\begin{constr} {\label{constr: n1 = k}}
	($n_1 = k, n_2-n_1=1,k$ even). The support structure of the matrix $F$ is as follows: $
	supp(\bold{f_i}) = [(i-1)\left \lceil \frac{t}{2} \right \rceil, (i-1)\left \lceil \frac{t}{2} \right \rceil + (t-1) ]  \mod n_1.
	$ 
	If server $ m \in [1,n_1]$ finishes the task first in the first phase, the support of the $B_{M}$ matrix, which is a row vector is as follows: $
	Z_1 = \bigcup_{j \in [0,k-1] \backslash (m-1)} (j-1)\left \lceil \frac{t}{2} \right \rceil + t \mod n_1,
	$
	
\end{constr}
\begin{proof}
	We split data into $\frac{k^2}{2}$ parts, namely $\{x_0,x_1, \ldots ,x_{\frac{k^2}{2}-1}\}$. 
	The support of the first row of the $F$ matrix is $[0 ,t-1]$. Each row of the $F$ matrix is obtained by shifting the previous row by $ \left \lceil \frac{t}{2} \right \rceil$ towards right. Any two consecutive servers have exactly $\left \lfloor \frac{t}{2} \right \rfloor$ partitions of data in common. Server $i$ and $i+1$ have $\{x_{(i)\left \lceil \frac{t}{2} \right \rceil }, x_{(i)\left \lceil \frac{t}{2} \right \rceil +1}, \ldots ,  x_{(i+1)\left \lceil \frac{t}{2} \right \rceil -1}  \}$ in common. 
	Initially the first $n_1$ servers are launched. Let us assume that server $m$ finishes the task first. Then one more server is launched. The content of this server includes the partitions of data which are unique to each of the first $n_1$ servers except server $m$. The data which is unique to the server $i, i \in [0,n_1]$ is $x_{(i-1)\left \lceil \frac{t}{2} \right \rceil + \left \lfloor \frac{t}{2} \right \rfloor}$. Hence, $k-1 = t$ partitions of data are included in the newly added server.
	
\end{proof}

\begin{thm}
	The code given in Construction \ref{constr: n1 = k} is a ($n_1,n_2=n_1+1,k=n_1,c=1$) tiered gradient code where $k$ is even.
\end{thm}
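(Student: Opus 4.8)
The plan is to verify the support condition of Condition~\ref{eq:supp_cond_tgc} for the parameters $n_1=k$, $n_2=n_1+1$, $c=1$; by the discussion preceding Condition~\ref{eq:supp_cond_tgc}, this suffices, since the probability-$1$ argument then proceeds exactly as in the proof of Lemma~3 of \cite{tandon2017gradient}. Write $t=k-1$ and $Q=k^2/2$; since $k$ is even, $\lceil t/2\rceil=k/2$ and $\lfloor t/2\rfloor=k/2-1$. Index the $Q$ partitions by $\mathbb{Z}_Q$, index the unique second-phase server by $[n_2-n_1]=\{1\}$, and let $M=\{m\}$ with $m\in[k]$.

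The first step is to record the combinatorial structure of the supports. Each $L_i$ is an interval of $t=k-1$ consecutive elements of $\mathbb{Z}_Q$, and $L_{i+1}$ is obtained from $L_i$ by a cyclic shift of $k/2$, so $L_i$ and $L_{i+1}$ overlap in $t-k/2=k/2-1\ge 0$ positions; consequently the $k$ intervals $L_1,\dots,L_k$ together cover all of $\mathbb{Z}_Q$. Moreover $\sum_{i=1}^k|L_i|=k(k-1)=2Q-k$, and a short check (each window of $k-1$ consecutive elements of $\mathbb{Z}_Q$ contains one or two of the $k$ equally spaced offsets $(i-1)k/2$) shows that no partition lies in more than two of the $L_i$; hence exactly $k$ partitions have multiplicity one. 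A direct modular computation using $k$ even shows that, for each $i$, the partition $x_{(i-1)k/2+k/2-1}=x_{ik/2-1}$ lies in $L_i$ and in no $L_{i'}$ with $i'\ne i$ (the offset never wraps back into $[0,k-2]$ because there are only $k$ servers), so these are precisely the $k$ multiplicity-one partitions, one per server. Comparing with the definition of $Z_1$ in Construction~\ref{constr: n1 = k} and substituting the server index $i=j+1$ gives $Z_1=\{\,x_{ik/2-1}: i\in[k]\setminus\{m\}\,\}$, so $|Z_1|=k-1$ and $Z_1$ is exactly the set of multiplicity-one partitions of the $k-1$ servers other than $m$.

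With these facts in hand, the verification of Condition~\ref{eq:supp_cond_tgc} — which for $n_1=k$ asks only that $|\bigcup_{i\in\mathcal{T}_1}L_i\cup\bigcup_{j\in\mathcal{T}_2}Z_j|\ge\ell$ for all admissible $\mathcal{T}_1\subseteq M\cup I_1$, $\mathcal{T}_2\subseteq I_2$ with $\ell=|\mathcal{T}_1|+|\mathcal{T}_2|\in[1,k]$ — is a short case analysis. If $I_2=\emptyset$ then $M\cup I_1=[k]$ and $\mathcal{T}_2=\emptyset$: for $\ell\le k-1$ a single $L_i$ already has $k-1\ge\ell$ elements, and for $\ell=k$ the union is all of $\mathbb{Z}_Q$, of size $Q\ge k$. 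If $I_2=\{1\}$ then $|I_1|=k-2$ and $M\cup I_1=[k]\setminus\{m'\}$ for a unique $m'\ne m$. If $\mathcal{T}_2=\emptyset$ then $\ell=|\mathcal{T}_1|\le k-1$ and a single $L_i$ suffices; if $\mathcal{T}_2=\{1\}$ and $\mathcal{T}_1=\emptyset$ then $\ell=1\le k-1=|Z_1|$; and if $\mathcal{T}_2=\{1\}$ and $\mathcal{T}_1\ne\emptyset$ then, since $m'\notin\mathcal{T}_1$ and the multiplicity-one partition $x_{m'k/2-1}$ of $m'$ belongs to $Z_1$ but to no $L_i$ with $i\ne m'$, we get $|\bigcup_{i\in\mathcal{T}_1}L_i\cup Z_1|\ge(k-1)+1=k\ge\ell$. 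This exhausts all cases, so Condition~\ref{eq:supp_cond_tgc} holds and the code is a tiered gradient code.

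I expect the only genuine work to be in the second paragraph: establishing that $x_{ik/2-1}$ is truly unique to server $i$, and matching the construction's formula for $Z_1$ with the set of these partitions. Both are elementary modular-arithmetic verifications, but this is exactly where evenness of $k$ is used, since it makes $\lceil t/2\rceil=k/2$ so that the shifts and the interval endpoints align cleanly; the subsequent case analysis is then just bookkeeping on set sizes.
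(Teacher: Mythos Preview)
Your proposal is correct and follows essentially the same approach as the paper: verify Condition~\ref{eq:supp_cond_tgc} (which reduces to $|\cup L_i \cup Z_j|\ge\ell$ since $n_1=k$) by exploiting the fact that $Z_1$ consists precisely of the partitions unique to each server other than $m$. The paper's own proof only explicitly treats the extremal case $|\mathcal{T}_2|=1$, $m\notin\mathcal{T}_1$, $|\mathcal{T}_1|=k-2$ and asserts the rest; your case analysis is more complete (you also verify the easy cases and make explicit the key combinatorial fact that each $x_{ik/2-1}$ lies in $L_i$ and no other $L_{i'}$), but the underlying idea is identical.
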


\begin{proof}
	
	We need to prove that the support condition given in Condition \ref{eq:supp_cond_tgc} is satisfied by the code in Construction \ref{constr: n1 = k}. Here, $M=\{m\}$. The $F$ matrix is a circulant matrix with each row shifted by $ \left \lceil \frac{t}{2} \right \rceil$ towards right from the previous row. Hence, if $\mathcal{T}_2 = \phi$, Condition \ref{eq:supp_cond_tgc} holds. 	We will now consider the case when $|\mathcal{T}_2| = 1$, $m \notin \mathcal{T}_1$, $|\mathcal{T}_1| = k-2$. That is precisely when we have taken all the servers from the first $n_1$ servers except server $m$ and one more server which is referred as server $b$. 
	When we picked the coordinates for the server $n_2$, we have included the coordinate which is unique to the server $b$. Hence, Condition \ref{eq:supp_cond_tgc} is satisfied. Thus, Condition \ref{eq:supp_cond_tgc} is satisfied for all cases and hence the code is a $(n_1,n_2=n_1+1,k=n_1,c=1)$ tiered gradient code for even $k$.
\end{proof}

\begin{table*}
	\begin{center}
		\begin{tabular}{ |p{0.5cm}| p{7cm} | p{7cm} | }
			\hline
			
			\textit{$C^{*}_{n_1}$}	&\textit{All the possible cases when $p$ is even}& \textit{All the possible cases when $p$ is odd}   \\ \hline
			
			$2$&	$p=0$ & $p'> \frac{3(p+1)}{2} $  \\ \cline{2-2}
			&  $p'> 3P$	&  \\ \hline

			$3$& $\frac{3p}{2}\leq p'\leq 3p$ &$\frac{3(p-1)}{2} \leq p'\leq \frac{3(p+1)}{2}$ \\ \hline

			&   $p'=1$ mod $3$ and  &  $p'=1$ mod $3$ and $  3 \lfloor \frac{p-1}{4} \rfloor < p' \leq \frac{3(p-1)}{2}$  \\ \cline{3-3}
			$4$&  $ \max{\{0,3 \lfloor \frac{p}{4} \rfloor -1\}} < p' < \frac{3p}{2}$ 	 &  $p'=2$  mod $3$, $ 2 \leq p'\leq  \lceil \frac{3(p-1)}{2} \rceil$ and \\ && $p' \neq 3 \lfloor \frac{p-1}{4} \rfloor -1$  \\ \cline{2-3}	 		
			& $p'=0,2$  mod $3$ and $ 2< p'\leq  \lceil \frac{p-1}{2} \rceil$ 	  & $p'=2$  mod $3$, $p'=7$ mod $8$ and \\ && $ \frac{3(p-7)}{4} < p' < \frac{3(p-1)}{2}$ \\ \cline{2-3}
			&   $p'=0,2$  mod $3$ and $ 3 \lceil \frac{p}{4} \rceil -1 < p' < \frac{3p}{2}$  & $p'=2$  mod $3$, $p' \neq 7$ mod $8$ and \\ && $6 \lfloor \frac{p}{8} \rfloor -3 < p' < \frac{3(p-1)}{2}$ \\ \hline
			
			&  & $p'=1$ mod $3$ and $ 3 \lceil \frac{p-2}{6} \rceil  < p' \leq 3 \lfloor \frac{p-1}{4} \rfloor$\\ \cline{3-3}
			$5$ && $p'=2$  mod $3$, $ 2 \leq p'\leq  \lceil \frac{3(p-1)}{2} \rceil$ and \\ & $p'$ is a multiple of $3$ and $\lceil \frac{p-1}{2} \rceil < p' <\frac{3p}{2}$ &$p' = 3 \lfloor \frac{p-1}{4} \rfloor -1$ \\ \cline{3-3}
			&  & $p'=0$  mod $3$, $p'=7$ mod $8$ and \\ &&$0< p' \leq \frac{3(p-7)}{4}$ \\ \cline{3-3}
			&  & $0< p' \leq 6 \lfloor \frac{p}{8} \rfloor -3 $ \\ \hline
			
			$6$& $p'=2$ mod $3$ and $\lceil \frac{p-1}{2} \rceil < p' <\frac{3p}{2}$& $p'=1$ mod $3$ and $0< p' \leq 3 \lceil \frac{p-2}{6} \rceil$\\ \cline{2-2}
			&$p'=0$ mod $3$ and $0< p'< \max{\{0,3 \lfloor \frac{p}{4} \rfloor -1\}}$ & \\ \hline
			$0$ & \multicolumn{2}{c|}{For all other cases not discussed above}\\ \hline

		\end{tabular}
	\end{center}
	\caption{Table that illustrates the value of $C^{*}_{n_1}$ for any $k \geq p+4$ and $n_1 = 3(k-1) + p$, for some integer $p$. Let $k'=p+4$ and $n'_1=3(k'-1)+p$.  If $k \geq p+4$ and $n_1 = 3(k-1) + p$, we can write $k$ and $n_1$ in terms of $k'$ and $n'_1$ as $k=k'+p'$ and $n_1 = n'_1 +3p',$ where $p'=\{0,1,2,...\}$.}
	\label{tab2}
\end{table*}

\subsection{Tiered Gradient Code for $k \geq 4,n_1 \in [3(k-1), 3(k-1)+(k-4)]$.}\label{line3-2}

In this subsection, we consider the case where $k \geq 4$ and $n_1 \in [3(k-1), 3(k-1)+(k-4)]$. For such cases we provide construction for $n_2=n_1+C^{*}_{n_1}$, where $C^{*}_{n_1} \geq C_{n_1}$. The value of $C^{*}_{n_1}$ is given in Table \ref{tab2}.

For any integer $p$, if $k \geq p+4$ and $n_1 = 3(k-1) + p$, the code construction is provided below. Let $k'=p+4$ and $n'_1=3(k'-1)+p$, for some integer $p$. We can write $k$ and $n_1$ in terms of $k'$ and $n'_1$ as $k=k'+p'$ and $n_1 = n'_1 +3p',$ where $p'=\{0,1,2,...\}$. The value $C^{*}_{n_1}$ varies from $2$ to $6$ depending upon $k$ and $n_1$, which is given in Table \ref{tab2}. Let $(0**)^y$ represent the sequence $\{0**\} $ repeated $y$ times, i.e,
\begin{equation}
(0**)^y=  \begin{bmatrix} 
0 &*& *& 0& *& *&  .... &0& *& * 
\end{bmatrix}_{1 \times 3y}
\end{equation}
where $*$ represents some non zero entry.  Similarly let $(**0)^y$ represent the sequence $ \{**0\}$, $(0*)^y$ represent the sequence $\{0 *\} $, $(*0)^y$ represent the sequence $\{*0\}$ and $(*)^y$ represent the sequence $\{*\}$ repeated $y$ times. We will now construct codes for the case where $n_2 = n_1 + C^{*}_{n_1}$. Let $B_M$ is a $C^{*}_{n_1} \times n_1$ matrix with $\bold{b}_i$ representing the $i^{th}$ row and $Z_{i}$ representing the support of $\bold{b}_i$, where $i \in [C^{*}_{n_1}]$. 
\begin{constr}[$n_2 = n_1 + C^{*}_{n_1}$] \label{constr:cyclic_spec_k_n1_equal Cn1}
	The support structure of the matrix $F$ is as follows:
	\begin{equation*}
	supp(\bold{f_i}) = [i-1, i+(n_1-k-1)] \mod n_1.
	\end{equation*}
	If $C^{*}_{n_1} = 2$, the support structure of  the $B_M$ matrix is same as in Construction \ref{constr:cyclic_gen_1}.
	The support structure of  the $B_M$ matrix for all other values of  $C^{*}_{n_1} $ is given in Table \ref{tab3}. The construction of the $B_M$ and $F$ matrices using the above support structures are same as in Construction \ref{constr:cyclic_gen_1}.
\end{constr}

\begin{table*}
	\begin{center}
		\begin{tabular}{|p{0.4cm}| p{0.6cm} | p{7.5cm} | p{8cm}| }
			\hline
			
			\textit{$C^{*}_{n_1} $}   &   \textit{$p'$ } &   \textit{The support structure of the $B_M$ matrix if $p$ is even}  &  \textit{The support structure of the $B_M$ matrix if $p$ is odd} \\
			\hline
			$3$   &  any $p'$   &  
			$ \begin{bmatrix} 
			(0**)^{p+1} (0*)^{n_1} 0 (*)^{k-1}\\
			0 (*0)^{n_1} (**0)^{p+1}(*)^{k-1} \\
			* (0*)^{\lceil \frac{n_1}{2} \rceil +1} (0**)^{p-1} (0*)^{\lfloor \frac{n_1}{2} \rfloor +2} (*)^{k-1}
			\end{bmatrix}  $ &	$ \begin{bmatrix} 
			(0**)^{p+1} (0*)^{n_1} 0 (*)^{k-1}\\
			0 (*0)^{n_1} (**0)^{p+1} (*)^{k-1}\\
			* (0*)^{\lceil \frac{n_1}{2} \rceil +1} (0**)^{p-1} (0*)^{\lfloor \frac{n_1}{2} \rfloor +2} (*)^{k-1}
			\end{bmatrix}  $   \\
			\hline
			$4$   &   $1,2$ mod $3$    &  	$ \begin{bmatrix} 
			(0**)^{p+1} (0*)^{n_1} 0 (*)^{k-1} \\
			0 (*0)^{n_1} (**0)^{p+1} (*)^{k-1} \\
			(0**)^{p} (0*)^{n_1+2}  (*)^{k-1}\\
			(*0)^{n_1+2} (**0)^{p} (*)^{k-1}\\
			\end{bmatrix}  $    &  	$ \begin{bmatrix} 
			(0**)^{p+1} (0*)^{n_1} 0 (*)^{k-1} \\
			0 (*0)^{n_1} (**0)^{p+1} (*)^{k-1} \\
			(0**)^{p} (0*)^{n_1+2}  (*)^{k-1}\\
			(*0)^{n_1+2} (**0)^{p} (*)^{k-1}\\
			\end{bmatrix} $           \\ \hline
			$4$ & $0$ mod $3$  & $ \begin{bmatrix} 
			(0**)^{p+1} (0*)^{n_1} 0 (*)^{k-1}\\
			0 (*0)^{n_1} (**0)^{p+1} (*)^{k-1}\\
			0 * (**0)^{\frac{p}{2}-1} (*0)^{n_1+1} * (0**)^{\frac{p}{2}} * 0 (*)^{k-1}  \\
			0 * (**0)^{\frac{p}{2}} * (0*)^{n_1+1}  (0**)^{\frac{p}{2}-1} * 0 (*)^{k-1}  
			\end{bmatrix}$ if $  p' \in [3, \lceil \frac{p-1}{2} \rceil]$& $\begin{bmatrix} 
			(0**)^{p+1} (0*)^{n_1} 0 (*)^{k-1}\\
			0 (*0)^{n_1} (**0)^{p+1} (*)^{k-1}\\
			0 * (**0)^{p-1} (*0)^{n_1+2} * (*)^{k-1} \\
			* (0*)^{n_1+2} (0**)^{p-1} * 0 (*)^{k-1}
			\end{bmatrix}$ \\ \hline
			$5$ & $1,2$ mod $3$ & $ \begin{bmatrix} 
			(0**)^{p+1} (0*)^{n_1} 0 (*)^{k-1}\\
			0 (*0)^{n_1} (**0)^{p+1} (*)^{k-1} \\
			0 * (**0)^{\frac{p}{2}-1} (*0)^{n_1+1} * (0**)^{\frac{p}{2}} * 0 (*)^{k-1}  \\
			0 * (**0)^{\frac{p}{2}} * (0*)^{n_1+1}  (0**)^{\frac{p}{2}-1} * 0 (*)^{k-1}  \\
			* (0*)^{\lceil \frac{n_1}{2} \rceil +1} (0**)^{p-1} (0*)^{\lfloor \frac{n_1}{2} \rfloor +2} (*)^{k-1}
			\end{bmatrix}$  & $ \begin{bmatrix} 
			(0**)^{p+1} (0*)^{n_1} 0 (*)^{k-1} \\
			0 (*0)^{n_1} (**0)^{p+1} (*)^{k-1}\\
			(0**)^{p} (0*)^{n_1+2}  (*)^{k-1}\\
			(*0)^{n_1+2} (**0)^{p} (*)^{k-1}\\
			0 * (**0)^{\frac{p-1}{2}} (*0)^{n_1+1} * (0**)^{\frac{p-1}{2}} * 0 (*)^{k-1}
			
		\end{bmatrix}$ \\ \hline
		
		$5$ & $0$ mod $3$ & $ \begin{bmatrix} 
		(0**)^{p+1} (0*)^{n_1} 0 (*)^{k-1}\\
		0 (*0)^{n_1} (**0)^{p+1} (*)^{k-1} \\
		0 * (**0)^{\frac{p}{2}-1} (*0)^{n_1+1} * (0**)^{\frac{p}{2}} * 0 (*)^{k-1}  \\
		0 * (**0)^{\frac{p}{2}} * (0*)^{n_1+1}  (0**)^{\frac{p}{2}-1} * 0 (*)^{k-1}  \\
		* (0*)^{\lceil \frac{n_1}{2} \rceil +1} (0**)^{p-1} (0*)^{\lfloor \frac{n_1}{2} \rfloor +2} (*)^{k-1}
		\end{bmatrix}$  &$ \begin{bmatrix} 
		(0**)^{p+1} (0*)^{n_1} 0 (*)^{k-1}\\
		0 (*0)^{n_1} (**0)^{p+1} (*)^{k-1}\\
		* (0*)^{\lceil \frac{n_1}{2} \rceil +1} (0**)^{p-1} (0*)^{\lfloor \frac{n_1}{2} \rfloor +2} (*)^{k-1}\\
		0 * (**0)^{u} (*0)^{n_1+1} 1 (0**)^{p-1-u} * 0 (*)^{k-1}  \\
		0 * (**0)^{p-1-u} (*0)^{n_1+1} 1 (0**)^{u} * 0 (*)^{k-1}
		\end{bmatrix},$ 	
		where 	$
		u=
		\begin{cases}
		\frac{p'-7}{3} & \text{if}\ p' = 1 \text{ mod } 6 \\
		\frac{p'-3}{3} & \text{if}\ p' = 3 \text{ mod } 6 \\
		\frac{p'-5}{3} & \text{if}\ p' = 5 \text{ mod } 6 
		\end{cases}$  \\ \hline
		
		$6$ & any $p'$ & $ \begin{bmatrix} 
		(0**)^{p+1} (0*)^{n_1} 0 (*)^{k-1} \\
		0 (*0)^{n_1} (**0)^{p+1} (*)^{k-1} \\
		(0**)^{p} (0*)^{n_1+2}  (*)^{k-1} \\
		(*0)^{n_1+2} (**0)^{p} (*)^{k-1}\\
		0 * (**0)^{\frac{p}{2}-1} (*0)^{n_1+1} * (0**)^{\frac{p}{2}} * 0 (*)^{k-1}  \\
		0 * (**0)^{\frac{p}{2}} * (0*)^{n_1+1}  (0**)^{\frac{p}{2}-1} * 0 (*)^{k-1}  
		\end{bmatrix}$  & $ \begin{bmatrix} 
		(0**)^{p+1} (0*)^{n_1} 0 (*)^{k-1}\\
		0 (*0)^{n_1} (**0)^{p+1} (*)^{k-1}\\
		(0**)^{p} (0*)^{n_1+2}  (*)^{k-1}\\
		(*0)^{n_1+2} (**0)^{p} (*)^{k-1}\\
		0 * (**0)^{u} (*0)^{n_1+1} * (0**)^{p-1-u} * 0 (*)^{k-1}  \\
		0 * (**0)^{p-1-u} (*0)^{n_1+1} * (0**)^{u} * 0 (*)^{k-1}
		\end{bmatrix}$, where 	$u=
		\begin{cases}
		\frac{p'+2}{3} & \text{if}\ p' = 1 \text{ mod } 6 \\
		\frac{p'-3}{3} & \text{if}\ p' = 3 \text{ mod } 6 \\
		\frac{p'-2}{3} & \text{if}\ p' = 5 \text{ mod } 6 
		\end{cases}$
		\\ \hline
	\end{tabular}
	
\end{center}
\caption{Table that illustrates the support structure of the $B_M $ matrix associated with Construction \ref{constr:cyclic_spec_k_n1_equal Cn1}.}
\label{tab3}

\end{table*}

\textit{Discussion on Construction \ref{constr:cyclic_spec_k_n1_equal Cn1}: }

The procedure to design the support of each row of the $B_M$ matrix is as follows.
$k-1$ coordinates of $Z_j$ are given by $ 	[n_1-k+1,n_1-1]  \subset Z_j,j \in C^{*}_{n_1},$ which corresponds to those partitions which are not included in server $1$.

We pick the remaining $n_1 - 2(k-1)$ coordinates as a subset of $L_j = [0, n-k]$ such that at least one of every pair of consecutive coordinates is present in the set. This is termed as the consecutive coordinate property. And also, the support structure of the $B_M$ matrix is designed in such a way that for every $\mathcal{T}_2$, which is a subset of $C^{*}_{n_1}$ of size $\ell$ ($1 \leq \ell \leq k-1$), we have $
|\cup_{i \in \mathcal{T}_2} Z_i| \geq (n_2-k) + \ell.
$

\begin{thm} \label{thm:special k and n1}
The code given in Construction \ref{constr:cyclic_spec_k_n1_equal Cn1} is a $(n_1,n_2=n_1+C^{*}_{n_1},k,c=1)$ tiered gradient code where $k \geq 4,n_1 \in [3(k-1), 3(k-1)+(k-4)]$.
\end{thm}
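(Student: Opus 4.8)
The plan is to follow the same route as the proof of Theorem~\ref{thm:cyclic1}: verify that the support Condition~\ref{eq:supp_cond_tgc} holds for the code of Construction~\ref{constr:cyclic_spec_k_n1_equal Cn1} with probability $1$, which by the discussion preceding Condition~\ref{eq:supp_cond_tgc} is sufficient to conclude that the span condition of Lemma~\ref{lem:two_span_b} holds, and hence that the code is an $(n_1,n_2=n_1+C^{*}_{n_1},k,c=1)$ tiered gradient code. Without loss of generality take $M=\{1\}$, so $L_1=[0,n_1-k]$ and the $k-1$ coordinates missing from server $1$ are $[n_1-k+1,n_1-1]$; since $n_2-n_1=C^{*}_{n_1}$, the second-phase index set is $I_2\subseteq[C^{*}_{n_1}]$. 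We must check $|\cup_{i\in\mathcal{T}_1}L_i\cup_{j\in\mathcal{T}_2}Z_j|\ge(n_1-k)+\ell$ for all admissible $\mathcal{T}_1\subseteq M\cup I_1$, $\mathcal{T}_2\subseteq I_2$ with $|\mathcal{T}_1|+|\mathcal{T}_2|=\ell$, $1\le\ell\le k$.

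I would then split into the same cases as in Theorem~\ref{thm:cyclic1}. If $\mathcal{T}_2=\phi$, the inequality follows from the cyclic support structure of $F$. If $|\mathcal{T}_2|=1$ and $M\in\mathcal{T}_1$, the union already contains $L_1$ together with at least one row of $B_M$, which covers $[0,n_1-1]$, so we are done. If $|\mathcal{T}_2|=1$ and $M\notin\mathcal{T}_1$, the argument of Theorem~\ref{thm:cyclic1} reduces matters to the worst case $|\mathcal{T}_1|=k-2$ with the two coordinates uncovered by $\cup_{i\in\mathcal{T}_1}L_i$ being consecutive modulo $n_1$; it then suffices that every row of $B_M$ satisfies the consecutive coordinate property (at least one of every pair of consecutive coordinates is in its support), which recovers the bound $n_1-1$. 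Finally, for $|\mathcal{T}_2|\ge2$, the inequality is implied by the stronger "spread" property $|\cup_{i\in\mathcal{T}_2}Z_i|\ge(n_2-k)+|\mathcal{T}_2|$ that the $B_M$ rows are designed to meet (stated in the discussion following Construction~\ref{constr:cyclic_spec_k_n1_equal Cn1}). Thus the theorem is reduced to two purely combinatorial structural facts about the explicit $B_M$ of Table~\ref{tab3}: (a) each row has no two consecutive zeros, and (b) any $\ell$ rows have supports whose union has size at least $(n_2-k)+\ell$.

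The bulk of the work is establishing (a) and (b) for the explicit matrices in Table~\ref{tab3}. For $C^{*}_{n_1}=2$ this is inherited verbatim from Construction~\ref{constr:cyclic_gen_1}. For $C^{*}_{n_1}\in\{3,4,5,6\}$ I would go through each row pattern — a concatenation of blocks $(0**)^y$, $(**0)^y$, $(0*)^y$, $(*0)^y$, $(*)^{k-1}$ with the exponents dictated by $p$, $n_1$, and the offsets $u$ — and verify (a) directly by inspecting block boundaries, and (b) by counting, for each subset of rows, the columns in which all chosen rows carry a zero and checking that these columns number at most $k-\ell$. Throughout I would use the normalization $k'=p+4$, $n_1'=3(k'-1)+p$, $k=k'+p'$, $n_1=n_1'+3p'$ from Table~\ref{tab2}, so that the case split on $p$ even/odd and $p'\bmod 3$ (and $p'\bmod 6$ for $C^{*}_{n_1}\in\{5,6\}$) is exactly the one tabulated, and confirm these residues partition all possibilities, so that $C^{*}_{n_1}$ is as stated.

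I expect the main obstacle to be property (b) for $|\mathcal{T}_2|\ge2$ when $C^{*}_{n_1}$ is $5$ or $6$: the zero positions in different $B_M$ rows are staggered precisely via the offsets $u$ in Table~\ref{tab3}, and one must show that for every subset of rows these zeros fail to align in more than $k-\ell$ columns. Tracking how the blocks $(**0)^u$ and $(0**)^{p-1-u}$ (and their mirror images) overlap across rows, and checking that the three-way case split on $p'\bmod 6$ yields an $u$ making the overlaps small in every residue class, is the delicate accounting; the cases $\mathcal{T}_2=\phi$, $|\mathcal{T}_2|=1$, and $|\mathcal{T}_2|=2$ with small $C^{*}_{n_1}$ are routine once the block structure is written out.
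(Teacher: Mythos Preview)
Your proposal is correct and follows essentially the same approach as the paper: the same case split on $|\mathcal{T}_2|$ (empty, one with $M\in\mathcal{T}_1$, one with $M\notin\mathcal{T}_1$, at least two), the same use of the cyclic structure of $F$ for $\mathcal{T}_2=\phi$, the same consecutive-coordinate argument for the $|\mathcal{T}_2|=1$, $M\notin\mathcal{T}_1$ case, and the same appeal to the spread property $|\cup_{i\in\mathcal{T}_2}Z_i|\ge(n_2-k)+\ell$ for $|\mathcal{T}_2|\ge 2$. The only difference is one of explicitness: the paper simply asserts that ``the support structure of the $B_M$ matrix is designed in such a way that Condition~\ref{eq:supp_cond_tgc} is satisfied'' for $|\mathcal{T}_2|\ge 2$ (deferring to the discussion after Construction~\ref{constr:cyclic_spec_k_n1_equal Cn1}), whereas you propose to actually carry out the block-by-block verification of properties (a) and (b) on the explicit matrices of Table~\ref{tab3}; that extra combinatorial accounting you flag for $C^{*}_{n_1}\in\{5,6\}$ is work the paper leaves implicit.
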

\begin{proof}
We have to show that Condition \ref{eq:supp_cond_tgc} is satisfied by the code in Construction \ref{constr:cyclic_spec_k_n1_equal Cn1}. Here $M=\{1\}$, assuming that server $1$ finished its task first. 	If $\mathcal{T}_2 = \phi$, Condition \ref{eq:supp_cond_tgc} follows from the support structure of cyclic repetition code. If $|\mathcal{T}_2| = 1$ and $M \in \mathcal{T}_1$, then the support of the union of $\mathcal{T}_1$ and $\mathcal{T}_2$ is $[n_1]$ and hence Condition \ref{eq:supp_cond_tgc} is satisfied.
Now, we will consider the case when $|\mathcal{T}_2| = 1$, $M \notin \mathcal{T}_1$, $|\mathcal{T}_1| = k-2$ and $\mathcal{T}_1$ is such that $
|\cup_{i \in \mathcal{T}_1} L_i | = n_1 - k + (k-2) = n_1-2.
$
Based on the cyclic support structure of the $F$ matrix, Condition \ref{eq:supp_cond_tgc} is true whenever $(k-2)$ consecutive rows (modulo $n_1$) are picked. Hence, the two coordinates which are not included in the union are consecutive. Since the support of the rows in the $B_M$ matrix satisfies consecutive coordinate property, at least one of the coordinates of the two coordinates which are not picked up before will be included after adding the new row. So we have, $
|\cup_{i \in \mathcal{T}_1} L_i \cup_{j \in \mathcal{T}_2} Z_j | \geq (n_1-1). $
Hence Condition \ref{eq:supp_cond_tgc} is satisfied.  The support structure of the $B_M$ matrix is designed in such a way that Condition \ref{eq:supp_cond_tgc} is satisfied. Hence, for the cases when $|\mathcal{T}_2| \geq 2$, Condition \ref{eq:supp_cond_tgc} is satisfied.

Thus, Condition \ref{eq:supp_cond_tgc} is satisfied for all cases and hence the code is a $(n_1,n_2=n_1+C^{*}_{n_1},k,c=1)$ tiered gradient code.
\end{proof}

\begin{constr}\label{const: gen_c1_n1>3(k-1)}
\textit(General $n_1,n_2 \ge 3(k-1)$).
If $n_1,n_2 \ge 3(k-1),$

\begin{enumerate}
	\item we need to find a $p$ such that the following condition is satisfied, i.e., for $C_{n_1} =\lfloor \frac{n_1-k+1}{k-1} \rfloor$,
	\begin{align*}
	n_2 &\leq n_1 +p + C_{n_1+p}\\
	& =n_1+p+\frac{n_1+p-k+1}{k-1}\\
	p &\geq n_2 -n_1  - \frac{n_2-k+1}{k}.
	\end{align*}
	Hence the minimum value of $p$ possible is $p^{*}=  \lceil n_2 -n_1 - \frac{n_2-k+1}{k} \rceil$. 
	\item we need to find $$n^{+}=\min{\{n' \in [\max{\{n_1,n_2-6\}},n_2-1]\}}$$ such that $C^{*}_{n^{+}} =\max_{n'\in [\max{\{n_1,n_2-6\}},n_2-1]} C^{*}_{n'}$. 
	\item we also need to find $$n_{min} = \min{\{n'' \in [\max{\{n_1,n_2-6\}},n_2-1]\}} $$ such that $n_2 \le n'' + C^{*}_{n''} $.

\end{enumerate} 

Our objective is to maximize $G_1$ in Theorem \ref{thm: main results}. Considering point $1$ alone, which is mentioned above, the maximum $G_1$ possible is $G^{1}_1=\min{\{n_2-(n_1+p^{*}),C_{n_1+p^{*}}\}}$. If we consider the point $2$, the maximum $G_1$ possible is $G^{2}_1=\min{\{n_2-n^{+},C^{*}_{n^{+}}\}}$. Similarly, if we consider the point $3$, the maximum $G_1$ possible is $G^{3}_1 =n_2-n_{min}$. So summarizing all the three points, the maximum $G_1$ is $\max{\{G^{1}_1,G^{2}_1,G^{3}_1\}}$. If $G_1=G^{1}_1$, let $n'_1= n_1+p^{*}$, else if $G_1=G^{2}_1$, let $n'_1= n^{+}$, else, let $n'_1 =n_{min}$. For $n_2 \ge 3(k-1)$, we can use Construction \ref{constr:cyclic_gen_1} where we replace $n_1$ with $n'_1$ servers. Initially, we launch $n_1$ of $n'_1$ servers. After one of those servers finish their tasks, the remaining $n_2-n_1$ servers are launched.
\end{constr}

\begin{note}
In this case, for general $n_2$, the amount of computation per server required is $\frac{n_2 -k+1 -G_1 }{n_2-G_1}$, where $G_1=	\max\{ 	\min\{n_2-(n_1+p^{*}),\lfloor \frac{n_1+p^{*}-k+1}{k-1} \rfloor\},$ $ \min\{n_2-n^{+},C^{*}_{n^{+}}\}, n_2-n_{min} \}, $ $p^{*}=  \lceil n_2 -n_1 - \frac{n_2-k+1}{k} \rceil$, $n^{+}=\min{\{n' \in [\max{\{n_1,n_2-6\}},n_2-1]\}}$ such that $C^{*}_{n^{+}} =\max_{n'\in [\max{\{n_1,n_2-6\}},n_2-1]} C^{*}_{n'}$, $n_{min} = \min{\{n'' \in [\max{\{n_1,n_2-6\}},n_2-1]\}} $ such that $n_2< n'' + C^{*}_{n''} $. This proves Theorem \ref{thm: main results} for $n_1,n_2 \geq 3(k-1)$.
\end{note}
\subsection{Tiered Gradient Codes for $2(k-1) < n_1  < 3(k-1), n_2 \geq 3(k-1)$ }
\label{sec: c1 mid range}
In this subsection, we provide tiered gradient codes for the case where $2(k-1) < n_1  < 3(k-1),  n_2 \geq 3(k-1)$. 
The construction is in similar lines to that of Construction \ref{const: gen_c1_n1>3(k-1)}, except that,

\begin{itemize}
\item for the point $1$ mentioned in Construction \ref{const: gen_c1_n1>3(k-1)}, the 'p' should satisfy one more condition, i.e., $n_1 +p \ge 3(k-1)$. So the minimum value of $p$ possible is $p^{*}=\max{\{3(k-1)-n_1,  \lceil n_2 -n_1 - \frac{n_2-k+1}{k} \rceil\}}$. 
\item we have to replace $n_1$ with $3(k-1)$ in point $2$ of Construction \ref{const: gen_c1_n1>3(k-1)}, i.e., we need to find $n^{+}=\min{\{n' \in [\max{\{3(k-1),n_2-6\}},n_2-1]\}}$ such that $C^{*}_{n^{+}} =\max_{n'\in [\max{\{3(k-1),n_2-6\}},n_2-1]} C^{*}_{n'}$. 
\item for point $3$ of Construction \ref{const: gen_c1_n1>3(k-1)} also, we need to replace $n_1$ with $3(k-1)$, i.e., we also need to find $n_{min} = \min{\{n'' \in [\max{\{3(k-1),n_2-6\}},n_2-1]\}} $ such that $n_2 \le n'' + C^{*}_{n''} $.

\end{itemize}

Our aim is to maximize $G_2$ in Theorem \ref{thm: main results}. So considering all the points mentioned above, the maximum $G_2$ is $\max{\{G^{1}_2,G^{2}_2,G^{3}_2\}}$, where $G^{1}_2=\min{\{n_2-(n_1+p^{*}),C_{n_1+p^{*}}\}}, G^{2}_2=\min{\{n_2-n^{+},C^{*}_{n^{+}}\}}$ and $G^{3}_2 =n_2-n_{min}$. 
If $G_2=G^{1}_2$, let $n'_1= n_1+p^{*}$, else if $G_2=G^{2}_2$, let $n'_1= n^{+}$, else, let $n'_1 =n_{min}$. For $n_2 \ge 3(k-1)$, we can use Construction \ref{constr:cyclic_gen_1} where we replace $n_1$ with $n'_1$ servers. Initially, we launch $n_1$ of $n'_1$ servers. After one of those servers finish their tasks, the remaining $n_2-n_1$ servers are launched.
%
%

\begin{note}
In this case, for general $n_2$, the amount of computation per server required is $\frac{n_2 -k+1 -G_2 }{n_2-G_2}$, where $G_2=	max\{ min\{	n_2-(n_1+p^{*}),\lfloor \frac{n_1+p^{*}-k+1}{k-1} \rfloor\},$ $ min\{n_2-n^{+},C^{*}_{n^{+}}\}, n_2-n_{min} \},\\ p^{*}=max{\{3(k-1)-n_1,  \lceil n_2 -n_1 - \frac{n_2-k+1}{k} \rceil\}}, n^{+}=min{\{n' \in [max{\{3(k-1),n_2-6\}},n_2-1]\}}$ such that $C^{*}_{n^{+}} =max_{n'\in [max{\{n_1,n_2-6\}},n_2-1]} C^{*}_{n'}$, $n_{min} = min{\{n'' \in [max{\{3(k-1),n_2-6\}},n_2-1]\}} $ such that $n_2 \le n'' + C^{*}_{n''} $. This proves Theorem \ref{thm: main results} for $2(k-1) < n_1  < 3(k-1), n_2 \geq 3(k-1),n_2 \geq 3(k-1)$.
\end{note}

\section{Tiered Gradient Codes for $c>1$}
\label{sec: c>1}
In this section, we deal with the case where initially we launch the first $n_1$ servers and wait for $c>1$ servers to complete their tasks. After that the remaining $n_2-n_1$ servers are launched.

This section is organized as follows. Initially, we discuss about tiered gradient codes for $k \le n_1 \leq 2(k-1)$ and $n_2 > 2(k-1)$. Then we move on to $n_1,n_2 \geq 2(k-1) + (k-c)$.  Towards the end, we provide a discussion on codes for $2(k-1) < n_1 < 2(k-1) + (k-c)$ and $n_2 \ge 2(k-1) + (k-c)$.

For $k \le n_1 \leq 2(k-1)$ and $n_2 > 2(k-1)$, we use Construction \ref{constr:frac1}, where we wait for $c$ servers to complete their tasks instead of one server.

For any specific $n_1 > 2(k-1)$, we use the unique cyclic repetition gradient code for the first $n_1$ servers.  We need to show that the Condition \ref{eq:supp_cond_tgc} holds for the codes under consideration. Let $M=\{i_1,i_2,...,i_c\}$. 	If $\mathcal{T}_2 = \phi$, Condition \ref{eq:supp_cond_tgc} follows from the support structure of cyclic repetition code. If $|\mathcal{T}_2| = 1$ and some subset of $M$ is included in $ \mathcal{T}_1$, the support of the union is $[n_1]$ and hence Condition \ref{eq:supp_cond_tgc} is satisfied. Consider the case where $|\mathcal{T}_2| = 1, M \notin \mathcal{T}_1$, $|\mathcal{T}_1| = k-c-1$ and $\mathcal{T}_1$ is such that
\begin{equation*}
|\cup_{i \in \mathcal{T}_1} L_i | = n_1 - k + (k-c-1) = n_1-c-1.
\end{equation*}
Based on the cyclic support structure of the $F$ matrix, Condition \ref{eq:supp_cond_tgc} is true whenever $(k-c-1)$ consecutive rows (modulo $n_1$) are picked. Hence, the $c+1$ coordinates which are not included in the union are consecutive. So, each row in the $B_M$ matrix should be designed in such a way that at least one of every $c+1$ consecutive coordinates should be non zero. 

Now, we will construct codes where at least one of every two consecutive coordinates is non zero in the $B_M$ matrix. We consider $Q=n_1$. Let $C'_{n_1} =\lfloor \frac{n_1-k+c}{k-1} \rfloor$. Initially, we launch $n_1$ servers. Let $\bold{f}_i$ represent the $i^{th}$ row of the $F$ matrix and $L_i$ represent the support of $\bold{f}_i$, where $i \in [n_1]$. Let $\bold{b}_i$ represent the $i^{th}$ row of the $B_M$ matrix and $Z_i$ represent the support of $\bold{b}_i$, where $i \in [C'_{n_1}]$. Let the columns of the $F$ and $B_M$ matrices be indexed by $[0,n_1-1]$. Let $\{i_1,i_2, \ldots , i_c\}$ be the $c$ servers who complete their tasks first. Then, $\{L_{i_1},L_{i_2}, \ldots , L_{i_c}\}$ is the support of $\{\bold{f}_{i_1},\bold{f}_{i_2}, \ldots , \bold{f}_{i_c}\}$ respectively.

The code construction for $n_1 \geq 2(k-1) + (k-c)$ and $n_2 = n_1 + C'_{n_1}$ is as follows. 

\begin{constr}
	\label{constr:cyclic_gen_c1}
	($n_1 \geq 2(k-1) +(k-c), n_2 = n_1 + C'_{n_1}$)
	The support structure of the matrix $F$ is as follows:
	$
	supp(\bold{f_i}) = [i-1, i+(n_1-k-1)] \mod n_1.
	$
	For some $t \in [0,n_1-1]$, let 
	\begin{align*}
	n_1 - |\cup_{i \in \{i_1,i_2, \ldots , i_c\}}L_i| &=g,\\
	[0,n_1-1] \setminus \cup_{i \in \{i_1,i_2, \ldots , i_c\}}L_i &=[t,t+g-1], \\
	l&=(n_1-k+c)-(k-1)C'_{n_1}.
	\end{align*}
	
	The $l+k-c$ elements of $Z_j$, for each $j \in [C'_{n_1}]$, are given by $
	[t-l,t+k+c-1] \subset Z_j, \ \ j=[C'_{n_1}].  $
	
	If $C'_{n_1} = 1$,  pick the remaining coordinates so that consecutive coordinate property is satisfied, i.e, at least one coordinate from every possible pair of two consecutive coordinates are picked up. Else if $C'_{n_1} > 1$, do the following.  
	Let $B^{*}_{M}$ be the matrix obtained by shifting all the columns (say, $y$ number of shifts done to each column towards right) in the $B_M$ matrix in such a way that the columns -$[t-l, t+k-c-1\}$ of the $B_M$ matrix become the last $l+k-c$ columns in $B^{*}_{M}$.

	\[
	B^{*}_{M}=
	\begin{bmatrix}
	B_{M_1} & B_{M_2} & \ldots &  B_{M_{k-1} } & B_{M'}
	\end{bmatrix}
	\]
	Let $Z^{*}_{i}$ represent the support of the $i^{th}$ row of the $B^{*}_{M}$ matrix.  $B_{M'}$ is the matrix obtained by taking the  last $l+k-c$ columns of the $B^{*}_{M}$ matrix. All the entries in the  $B_{M'}$ matrix are non zero, i.e.,
	\begin{equation*}
	[n_1-(l+k-c), n_1 -1]\subset Z^{*}_j,j \in [C'_{n_1}].
	\end{equation*} 
	
	$B_{M_1}$ constitutes of the first $C'_{n_1}$ columns of the $B^{*}_{M}$ matrix, $B^{*}_{M_2}$ constitutes of the next $C'_{n_1}$ columns and so on.
	Each $B_{M_j}$ is a $C'_{n_1} \times C'_{n_1}$ matrix which is obtained by taking distinct and consecutive $C'_{n_1}$ columns from the $B^{*}_M$ matrix sequentially.  
	
	The support of the $i^{th}$ row of each matrix $B_{M_j}$, where $j \in [k-1]$, is of the form $[i-1, i+C'_{n_1}-3]$ mod $C'_{n_1}$. The support 
	structure of the $B_M$ matrix is same as that of the $B^{*}_M$ matrix with each column of the $B^{*}_M$ matrix shifted towards left by $y$. 	The construction of the $B_M$ and $F$ matrices using the above support structures is same as in Construction \ref{constr:cyclic_gen_1}.

\end{constr}
\begin{proof}

	From Lemma \ref{lem:supp_cond}, we have $
	[0,n_1-1] \setminus \cup_{i \in \{i_1,i_2, \ldots , i_c\}}L_i \subset Z_j, \ \ j=[C'_{n_1}].  $
	Thus $g$ coordinates are included in each $Z_j$, for each $j \in [C'_{n_1}]$. We have to add $|Z_j|-g = n_1-k+1-g$ more coordinates to $Z_j$ from $\cup_{i \in \{i_1,i_2, \ldots , i_c\}}L_i$. That is, we need to pick $n_1-k+1-g$ from $n_1-g$ locations available. Hence for the consecutive coordinate property to be satisfied, $n_1-k+1-g \geq \lfloor \frac{n_1-g}{2} \rfloor$, i.e., $n_1 \geq 2(k-1) + g$. The maximum value that $g$ can take is $k-c$, which is basically when all the $c$ servers who finish first are consecutive ones. Considering the worst case scenario, the consecutive coordinate property is satisfied when $n_1 \geq 2(k-1) + (k-c)$, which is our range of $n_1$ for which the code is constructed.

	The procedure to design the support of each row of the $B_M$ matrix is as follows. It is required that $|Z_{j}| = n_1-k+1$, for each $j \in [C'_{n_1}]$. No two servers among the first $n_1$ servers can have disjoint data set. It comes from the fact that $
	2(n_1-k+1) =2n_1-2(k-1)>2n_1-n_1=n_1.$
	The inequality in the second step is satisfied since $n_1 \geq 2(k-1) + (k-c)$. Hence  the set $[0,n_1-1] \setminus \cup_{i \in \{i_1,i_2, \ldots , i_c\}}L_i $ contains consecutive $g$ coordinates.
	
	Pick any consecutive $k-c$ coordinates from $[0, n_1-1]$ which includes the above $g$ coordinates. Let it be $\{t,t+1, \ldots, t+k-c-1\}$. Let $
	[t,t+k-c-1]  \subset Z_j, \ \ j=[C'_{n_1}].
	$.
	Thus $k-c$ coordinates are included in each $Z_j,j=[C'_{n_1}]$. Rest of the coordinates of $Z_j$, are picked to satisfy the consecutive coordinate property. And also, these are picked so that $|Z_j \cup Z_i| = n_1$, for any $j,i \in [C'_{n_1}]$.	
	The $l$ coordinates - $[t-l, t-1 ]$ are also included in $Z_j$. Thus, totally, $l+k-c$ coordinates are included in each $Z_j$. We have to add $|Z_j|-(l+k-c) = n_1-2k+1+c-l$ more coordinates to $Z_j$ from $ [0, n_1-1]\setminus [t-l, t+k-c-1]$. That is, we need to pick $n_1-2k+1+c-l$ from $n_1-k+c-l$ locations available.
	
	If $C'_{n_1} = 1$,  we pick the remaining coordinates so that consecutive coordinate property is satisfied. Else if $C'_{n_1} > 1$, we do the following.	
	Let $B^{*}_{M}$ be the matrix obtained by shifting all the columns in the $B_M$ matrix in such a way that the columns -$[t-l, t+k-c-1\}$ in the $B_M$ matrix become the last $l+k-c$ columns in  the $B^{*}_{M}$ matrix. Let $y$ be the number of shift done to each column of the $B_{M}$ matrix towards right to obtain the $B^{*}_{M}$ matrix. Let $Z^{*}_{i}$ represent the support of the  $i^{th}$ row of the $B^{*}_{M}$ matrix.
	
	Let $B_{M'}$ be the matrix obtained by taking the last $l+k-c$ columns of the $B^{*}_{M}$ matrix. All the entries in the $B_{M'}$ matrix are non zero, i.e., $[n_1-(l+k-c), n_1 -1]\subset Z^{*}_j,j \in [C'_{n_1}]$. 
	$B_{M_1}$ constitutes of the first $C'_{n_1}$ columns of the $B^{*}_{M}$ matrix, $B^{*}_{M_2}$ constitutes of the next $C'_{n_1}$ columns and so on.
	Hence, each $B_{M_j}$ is a $C'_{n_1} \times C'_{n_1}$ matrix which is obtained by taking distinct and consecutive $C'_{n_1}$ columns from the $B^{*}_M$ matrix.  	
	We have the support structure of the $B_{M'}$ matrix. The support structure for the remaining coordinates of the $B^{*}_M$ matrix, i.e., $n_1-2k+c+1-l$ more coordinates to be added to $Z^{*}_j$, is obtained from the design of the support structure corresponding to the matrices $B_{M_j}, j \in [k-1]$.
	
	The support of the $i^{th}$ row of each matrix $B_{M_j}$, where $j \in [k-1]$, is of the form $[i-1, i+C'_{n_1}-3]$ mod $C'_{n_1}$. The cardinality of the support of each row of the $B_{M_j}$ matrix is $C'_{n_1}-1$, i.e, there is exactly one zero in each row of the $B_{M_j}$ matrix at disjoint locations. Hence, the number of zeros in each row of the $B^{*}_{M}$ matrix is exactly $k-1$, which is exactly what we needed. 
	
	The cardinality of the support of union of any two rows of  the $B_{M_j}$ matrix is $C'_{n_1}$. Hence if we take union of support of any two rows in the $B^{*}_{M}$ matrix, then it has cardinality $n_1$. That is, $|Z_{r} \cup Z_{s}| = n_1$, for any $r,s \in [C'_{n_1}]$. Since the $B^{*}_M$ matrix is obtained by column shift of the $B_{M}$ matrix, the above property holds for the $B_M$ matrix also. Hence the support structure of the  $B_M $ matrix satisfies all the required conditions.
	
\end{proof}

\begin{example} {\label{exmp: 2}}
	Let $n_1=9,k=4,c=2$ and $n_2=11.$ We split data into $9$ partitions -$\{x_0,x_1,\ldots , x_8\}$. The server $i$ is assigned partitions $\{x_j, j \in [i-1, i+4]\}$. Each server computes the gradients on their respective data. Suppose server $1$ and $3$ finish their tasks first when $n_1$ servers are launched. After that the remaining $n_2-n_1=2$ servers are launched. Server $1$ or $3$ do not have $\{x_8\}$ as their contents. Here, $k-c=2$ and $l=1$. Hence we have to include  $\{x_7,x_8,x_0\}$ in the content of the two added servers.  The first column and the last two columns of the $B_M$ matrix is filled with non zero entries.  Shift each column of the $B_M$ matrix by $8$ units towards right to obtain the $B_M^{*}$ matrix, $
	B_{M}^{*}=
	\begin{bmatrix}
	B_{M_1} & B_{M_2}&B_{M_3} & B_{M'}
	\end{bmatrix}
	$. 
	The $B_{M'}$ matrix is obtained by taking the last three columns of the $B^{*}_M$ matrix. Hence it is a $2 \times 3 $ matrix. $B_{M_1},B_{M_2}$ and $B_{M_3}$ are $3 \times 3$ matrices. $B_{M_1}$ is the submatrix formed by the first columns of the $B^{*}_{M}$ matrix, $B_{M_2}$ is formed by the next three columns of the $B^{*}_{M}$ matrix  and $B_{M_3}$ by the next three columns. The support of the $i^{th}$ row of each of the matrices  $B_{M_1},B_{M_2}$ and $B_{M_3}$ is of the form $\{i-1\}$ mod $2$. Hence the structures of the $B_{M_1},B_{M_2},B_{M_3} $ and $B^{*}_M$ matrices are of the form
	$
	B_{M_1}= B_{M_2}=B_{M_3}=
	\begin{bmatrix}
	0 & *   \\
	* & 0   \\
	\end{bmatrix}$ and $
	B^{*}_{M}=
	\begin{bmatrix}
	0 & * & 0 & * & 0 & * & * & * & *  \\
	* & 0 & * & 0 & * & 0 & * & *  & *\\
	\end{bmatrix}
	$.
	The entries $*$ represent non zero values. Shift each column, in the  $B^{*}_{M}$ matrix, $8$ units towards left to obtain the support structure of the $B_M$ matrix, $
	B_{M}=
	\begin{bmatrix}
	*&0 & * & 0 & * & 0 & * & * & *   \\
	*&* & 0 & * & 0 & * & 0 & * & * \\
	\end{bmatrix}
	$.  The content of the two added servers are $\{x_0,x_2,x_4,x_6,x_7,x_8\}$ and  $\{x_0,x_1,x_3,x_5,x_7,x_8\}$ respectively. Out of the $8$ servers which haven't finished the job earlier and the two added servers, any three servers can give the sum of the gradients along with server $1$ and $3$. Each server does $6/9$ computations compared to $8/11$ required for the $(n_2,k)$ gradient code.
	
\end{example}

For $n_1 \geq 2(k-1)+(k-c),  n_2 < n_1 + C_{n_1}$, we take the support structure of any $n_2-n_1$ rows of the $B_M$ matrix constructed using Construction \ref{constr:cyclic_gen_c1} ($ n_2 = n_1 +C'_{n_1}$) to generate the support structure for the $B_M$ matrix in this case. The support structure of the matrix $F$, the construction of the $B_M$ and $F$ matrices using the above support structures is same as in Construction \ref{constr:cyclic_gen_1}.


\begin{example}
	Consider Example \ref{exmp: 2} with $n_1=9,k=4,c=2$. Consider $n_2=10$. Here, $C'_{n_1}=2$. Hence, $n_2-n_1 = 1<2$. 
	The setting is same as in Example \ref{exmp: 2}. The only difference is that $ n_2-n_1 < C'_{n_1}$. 	
	Hence we can use any one row of the $B_M$ matrix from Example \ref{exmp: 2} to generate the $B_M$ matrix for this case. Let us take the first  row. Hence, $
	B_{M}=
	\begin{bmatrix}
	* & 0 & * & * & 0 & * & * & * & *\\
	\end{bmatrix}
	$,
	where the symbol $*$ represent non zeros entries. Hence the content of the one added server is  $\{x_0,x_2,x_3,x_5,x_6,x_7,x_8\}$. Out of the $8$ servers which haven't finished the job earlier and the one added servers, any three servers can give the sum of the gradients along with server $1$. Each server does $7/9$ computations compared to $8/10$ required for the $(n_2,k)$ gradient code.
\end{example}

For $n_1 \geq 2(k-1) + (k-c), n_2  > n_1 +C'_{n_1}$, we need to find a $p$ such that the following condition is satisfied, i.e.,
\begin{align*}
n_2 &\leq n_1 +p + C'_{n_1+p}\\
& =n_1+p+\frac{n_1+p-k+c}{k-1}\\
p &\geq n_2 -n_1 - \frac{n_2-k+c}{k}.
\end{align*}
Hence the minimum value of $p$ possible is $p^{*}=\lceil n_2 -n_1 - \frac{n_2-k+c}{k} \rceil$. So, for $n_2 > n_1 + C'_{n_1}$, we can use Construction \ref{constr:cyclic_gen_c1} where we replace $n_1$ with $n_1+p^{*}$ servers. Initially, we launch $n_1$ of $n_1 + p^{*}$ servers. After $c$ of those servers finish their tasks, the remaining $n_2-n_1$ servers are lauched.


\begin{example}
	Consider $n_1=9,k=4,c=2$ and $n_2=12.$ Here, $C'_{n_1}=2$. Hence $n_2 > n_1 + C_{n_1}$. Here, $p^{*} = 1 $ and $C_{n_1+p^{*}}=2$. So, we split data into $10$ partitions -$\{x_0,x_1,\ldots , x_9\}$. The server $i$ is assigned data $\{x_j, j \in [i-1,i+5]$. Each server computes the gradients on their respective data. 
	
	Initially the first $n_1$ servers are launched. Suppose server $2$ and $4$ finish their tasks first. After that all the remaining servers are launched. Since server $2$ or $4$ doesn't have $\{x_1\}$ as their content, $k-c=2$ and $l=n_1+p^{*} - k+1-(k-1) C_{n_1+p^{*}}=2$, we have to include  $\{x_0,x_9,x_1,x_2\}$ in the content of server $11$ and $12$. The last three and the first column of the $B_M$ matrix are filled with non zero entries. Shift each column of the $B_M$ matrix by $9$ units towards right to obtain $B_M^{*}$ matrix.   
	$
	B_{M}^{*}=
	\begin{bmatrix}
	B_{M_1} & B_{M_2}&B_{M_3} & B_{M'}
	\end{bmatrix}
	$. 
	$B_{M'}$ is a $2 \times 3 $ matrix obtained by taking the last four columns of $B^{*}_M$. $B_{M_1},B_{M_2}$ and $B_{M_3}$ are $2 \times 2$ matrices. $B_{M_1}$ is the submatrix formed by the first two columns of the $B^{*}_{M}$ matrix, $B_{M_2}$ is formed by the next two columns of the $B^{*}_{M}$ matrix and $B_{M_3}$ by the next two columns. The support of the $i^{th}$ row of each of the matrices  $B_{M_1},B_{M_2}$ and $B_{M_3}$ is of the form $\{i-1\}$ mod $2$. Hence the structures of the $B_{M_1},B_{M_2},B_{M_3} $ and $B^{*}_M$ matrices are of the form
	$
	B_{M_1}= B_{M_2}=B_{M_3}=
	\begin{bmatrix}
	0 & *   \\
	* & 0   \\
	\end{bmatrix}
	$ and $
	B^{*}_{M}=
	\begin{bmatrix}
	0 & * & 0 & * & 0 & * & * & * & * & * \\
	* & 0 & * & 0 & * & 0 & * & *  & * & *\\
	\end{bmatrix}
	$.
	The entries $*$ represent non zero values. Shift each column, in the  $B^{*}_{M}$ matrix, $9$ units towards left to obtain the support structure of the $B_M$ matrix, $
	B_{M}=
	\begin{bmatrix}
	*&*&*&0 & * & 0 & * & 0 & * & *    \\
	*&*&*&* & 0 & * & 0 & * & 0 & *  \\
	\end{bmatrix}
	$.
	The content of the two added servers are $\{x_0,x_1,x_2,x_4,x_6,x_8,x_9\}$ and  $\{x_0,x_1,x_2,x_3,x_5,x_7,x_9\}$ respectively. Out of the $9$ servers which haven't finished the job earlier and the two added servers, any three servers can give the sum of the gradients along with servers $2$ and $4$. Each server does $7/10$ computations compared to $9/12$ required for the $(n_2,k)$ gradient code.
\end{example}

\begin{thm} \label{thm:cyclic_C_n1}
	The code given in Construction \ref{constr:cyclic_gen_c1} is a tiered gradient code where $n_1,n_2 \geq 2(k-1) +(k-c)$.
\end{thm}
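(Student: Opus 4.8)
The plan is to show that Condition~\ref{eq:supp_cond_tgc} is satisfied by the code of Construction~\ref{constr:cyclic_gen_c1} with probability~$1$; by the remark preceding Condition~\ref{eq:supp_cond_tgc} (which reuses the proof of Lemma~3 of \cite{tandon2017gradient}) this gives the span condition of Lemma~\ref{lem:two_span_b}, and hence that the construction is a valid $(n_1,n_2,k,c)$ tiered gradient code. Two preliminaries: $F$ is the cyclic repetition $(n_1,k)$ gradient code, so the requirement of the Claim is met; and, as already established inside the proof of Construction~\ref{constr:cyclic_gen_c1}, each $|Z_j|=n_1-k+1$, each row of $B_M$ obeys the consecutive-coordinate property (one of every two cyclically consecutive coordinates is nonzero), $|Z_r\cup Z_s|=n_1$ for all $r,s\in[C'_{n_1}]$, and $[0,n_1-1]\setminus\cup_{i\in M}L_i\subseteq Z_j$ for every $j$. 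One then fixes $M=\{i_1,\dots,i_c\}$, sets $I_1\subseteq[n_1]\setminus M$ and $I_2\subseteq[n_2-n_1]$ with $|I_1\cup I_2|=k-c$, and $\mathcal{T}_1\subseteq M\cup I_1$, $\mathcal{T}_2\subseteq I_2$ with $|\mathcal{T}_1|+|\mathcal{T}_2|=\ell$, and bounds $|\cup_{i\in\mathcal{T}_1}L_i\cup_{j\in\mathcal{T}_2}Z_j|$ by a case split on $|\mathcal{T}_2|$.

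For $\mathcal{T}_2=\phi$ the required bound is exactly Condition~\ref{supp_cond_gradient} for the cyclic $(n_1,k)$ code. For $|\mathcal{T}_2|\ge 2$, $|Z_r\cup Z_s|=n_1$ forces the union to have size $n_1\ge(n_1-k)+\ell$. The substantive case is $|\mathcal{T}_2|=1$. If $M\subseteq\mathcal{T}_1$, then $\cup_{i\in\mathcal{T}_1}L_i\supseteq\cup_{i\in M}L_i$ and adjoining the single row $Z_j\supseteq[0,n_1-1]\setminus\cup_{i\in M}L_i$ recovers all of $[0,n_1-1]$. Otherwise $\mathcal{T}_1\subsetneq M\cup I_1$; since here $|I_2|=1$ gives $|M\cup I_1|=c+|I_1|=k-1$, this forces $|\mathcal{T}_1|=\ell-1\le k-2$. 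If $|\cup_{i\in\mathcal{T}_1}L_i|\ge(n_1-k)+\ell$ we are done; otherwise, by Condition~\ref{supp_cond_gradient}, $|\cup_{i\in\mathcal{T}_1}L_i|=(n_1-k)+(\ell-1)$. Now $n_1>2(k-1)$, so each $L_i$ is a cyclic arc of length $n_1-k+1>n_1/2$ and each $L_i^c$ is an arc of length $k-1<n_1/2$; since the intersection of two such short arcs is again a short arc or empty, induction gives that $[0,n_1-1]\setminus\cup_{i\in\mathcal{T}_1}L_i=\cap_{i\in\mathcal{T}_1}L_i^c$ is a single block of $k-\ell+1\ge 2$ consecutive coordinates. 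That block contains two adjacent coordinates, so the consecutive-coordinate property of $Z_j$ forces $Z_j$ to meet it, yielding $|\cup_{i\in\mathcal{T}_1}L_i\cup Z_j|\ge(n_1-k)+\ell$ (the case $\ell=1$ is immediate, the union then being $Z_j$ of size $n_1-k+1$). This exhausts all cases. The values $n_2\ne n_1+C'_{n_1}$ are dealt with exactly as in the notes following Construction~\ref{constr:cyclic_gen_c1}: for $n_2<n_1+C'_{n_1}$ one keeps any $n_2-n_1$ rows of $B_M$, deletion preserving all the structural facts above; for $n_2>n_1+C'_{n_1}$ one applies the construction with $n_1$ replaced by $n_1+p^{*}$ (chosen so that $n_2\le(n_1+p^{*})+C'_{n_1+p^{*}}$) and again retains a subset of the rows, the verification being verbatim.

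I expect the case $|\mathcal{T}_2|=1$ with $M\not\subseteq\mathcal{T}_1$ to be the main obstacle, for two reasons. First, one must be sure that whenever the $F$-rows indexed by $\mathcal{T}_1$ fail to cover enough coordinates, the uncovered coordinates form a single consecutive block; this is precisely where $n_1\ge 2(k-1)+(k-c)$ (hence $n_1>2(k-1)$, each $L_i$ larger than half the cycle) enters, through the arc-intersection fact. Second, one must confirm that the per-row consecutive-coordinate property of $B_M$ --- which is strictly stronger than the ``one of every $c+1$ consecutive coordinates nonzero'' condition isolated in the discussion before Construction~\ref{constr:cyclic_gen_c1} --- genuinely holds for the $B_M$ the construction produces, i.e.\ that it is not destroyed by the cyclic column shift taking $B_M$ to $B^{*}_M$ and back; checking this amounts to tracing the zero pattern of the blocks $B_{M_j}$ (one zero per row, at positions $C'_{n_1}\ge 2$ apart) through the shift.
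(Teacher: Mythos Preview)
Your proof is correct and follows the same route as the paper: verify Condition~\ref{eq:supp_cond_tgc} by splitting on $|\mathcal{T}_2|$, using the cyclic $(n_1,k)$ code for $\mathcal{T}_2=\phi$, the property $|Z_r\cup Z_s|=n_1$ for $|\mathcal{T}_2|\ge 2$, and the inclusion $[0,n_1-1]\setminus\cup_{i\in M}L_i\subseteq Z_j$ together with the consecutive-coordinate property for $|\mathcal{T}_2|=1$. Your handling of the subcase $|\mathcal{T}_2|=1$, $M\not\subseteq\mathcal{T}_1$ is in fact more careful than the paper's own proof (which asserts that the union equals $[n_1]$ as soon as \emph{any} element of $M$ lies in $\mathcal{T}_1$, a claim that is not literally true---your arc-intersection argument is the right fix, and the worry about the column shift is unfounded since cyclic shifts preserve adjacency).
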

\begin{proof}
	
	We need to prove that Condition \ref{eq:supp_cond_tgc} is satisfied by the code in Construction \ref{constr:cyclic_gen_c1}.  Let $M=\{i_1,i_2,...,i_c\}$. 	If $\mathcal{T}_2 = \phi$, Condition \ref{eq:supp_cond_tgc} follows from the support structure of the cyclic repetition code. If $|\mathcal{T}_2| = 1$ and some subset of $M$ is included in $ \mathcal{T}_1$, then the support of the union is $[n_1]$ and hence Condition \ref{eq:supp_cond_tgc} is satisfied. For the case of $|\mathcal{T}_2| \geq 2$, since $Z_i$ and $Z_j$ are chosen such that $Z_i \cup Z_j = [n_1]$, for any $i,j \in [C'_{n_1}]$, we have that Condition \ref{eq:supp_cond_tgc} is trivially satisfied.
\end{proof}

\begin{constr}\label{const: gen_c>1_n1>3(k-1)}
	\textit(General $n_1,n_2 \ge 2(k-1)+(k-c)$).
	If $n_1,n_2 \ge 2(k-1)+(k-c),$
	
	\begin{enumerate}
		\item  we need to find a $p$ such that the following condition is satisfied, i.e.,
		\begin{align*}
		n_2 &\leq n_1 +p + C'_{n_1+p}\\
		& =n_1+p+\frac{n_1+p-k+c}{k-1}\\
		p &\geq n_2 -n_1 - \frac{n_2-k+c}{k}.
		\end{align*}

	\end{enumerate} 
	Hence the minimum value of $p$ possible is $p^{*}=\lceil n_2 -n_1 - \frac{n_2-k+c}{k} \rceil$. 	Our objective is to maximize $G_3$ in Theorem \ref{thm: main results}.  So considering the above mentioned point, the maximum $G_3$ possible is $\min{\{n_2-(n_1+p^{*}),C'_{n_1+p^{*}}\}}$. So, for $n_2,  n_1 \ge 2(k-1)+(k-c)$, we can use Construction \ref{constr:cyclic_gen_c1} where we replace $n_1$ with $n_1+p^{*}$ servers. Initially, we launch $n_1$ of $n_1 + p^{*}$ servers. After $c$ of those servers finish their tasks, the remaining $n_2-n_1$ servers are launched. 
\end{constr}

\begin{note}
	For $n_1,n_2 \geq 2(k-1) + (k-c)$, the computation per server required is proportional to $\frac{n_2 -k+1-G_3 }{n_2-G_3}$, where $G_3=\min{\{n_2-(n_1+p^{*}),\lfloor \frac{n_1+p^{*}-k+c}{k-1} \rfloor \}},$ $p^{*}=\lceil n_2 -n_1 - \frac{n_2-k+c}{k} \rceil$.  This proves Theorem \ref{thm: main results} for general $c$ and $n_1 \geq 2(k-1) + (k-c)$.
\end{note}

For $2(k-1) < n_1 < 2(k-1) + (k-c)$ and $n_1 \geq 2(k-1) + (k-c)$, the construction is in similar lines to that of Construction \ref{const: gen_c>1_n1>3(k-1)}, except that,

\begin{itemize}
	\item for the point $1$ mentioned in Construction \ref{const: gen_c>1_n1>3(k-1)}, the 'p' should satisfy one more condition, i.e., $n_1 +p \ge 2(k-1)+(k-c)$. So the minimum value of $p$ possible is\\ $p^{*}=\max{\{(2(k-1)+(k-c))-n_1,  \lceil n_2 -n_1 - \frac{n_2-k+c}{k} \rceil\}}$. 
	
\end{itemize}

Our aim is to maximize $G_4$ in Theorem \ref{thm: main results}. So noting the point mentioned above, the maximum $G_4$ possible is $\min{\{n_2-(2(k-1)+(k-c)),C'_{n_1+p^{*}}\}}$. So, for $n_2 \ge 2(k-1)+(k-c)$, we can use Construction \ref{constr:cyclic_gen_c1}, where we replace $n_1$ with $n_1+p^{*}$ servers. Initially, we launch $n_1$ of $n_1 + p^{*}$ servers. After $c$ servers finish their tasks, the remaining $n_2-n_1$ servers are launched.

\begin{note}
	For $2(k-1) < n_1 < 2(k-1) + (k-c),n_2 \geq 2(k-1) + (k-c)$, the computation per server required is proportional to $\frac{n_2 -k+1-G_4 }{n_2-G_4}$, where $G_4=\min{\{n_2-(2(k-1)+(k-c)),\lfloor \frac{n_1+p^{*}-k+c}{k-1} \rfloor \}},$ $p^{*}=\max{\{(2(k-1)+(k-c))-n_1,  \lceil n_2 -n_1 - \frac{n_2-k+c}{k} \rceil\}}$. This proves Theorem \ref{thm: main results} for general $c$ and $n_1 \geq 2(k-1) + (k-c)$.
\end{note}

\begin{figure*}[hbt!]
	\begin{subfigure}[]{.31\textwidth}
		\includegraphics[width=\linewidth]{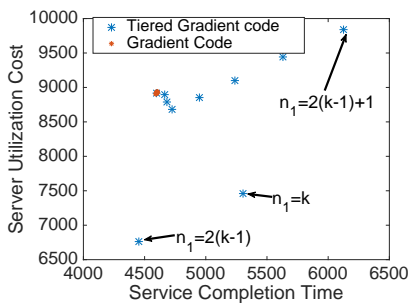}
		\caption{Task completion time distributed as SE1}
		\label{fig:SUC_Vs_SCT_k3_fixed_p}
	\end{subfigure}
	\hspace{.1in}
	\begin{subfigure}[]{.31\textwidth}
		\includegraphics[width=\linewidth]{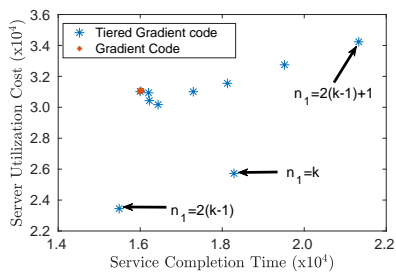}
		\caption{Task completion time distributed as SE2}
		\label{fig:SUC_Vs_SCT_k3_var_p}
	\end{subfigure}
	\hspace{.1in}
	\begin{subfigure}[]{.31\textwidth}
		\includegraphics[width=\linewidth]{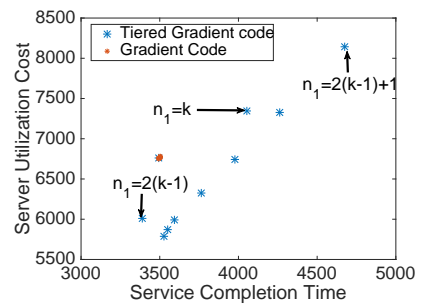}
		\caption{Task completion time distributed as Pa}
		\label{fig:SUC_Vs_SCT_k3_pareto}
	\end{subfigure}
	\caption{Server Utilization Cost as a function of Service Completion Time when we vary $n_1 \in [k,n_2]$ for $n_2=12$, $c=1$, and $k=3$.}\label{Figcase2}
\end{figure*}

\begin{figure*}[hbt!]
	\begin{subfigure}[]{.4\textwidth}
			\includegraphics[width=\linewidth]{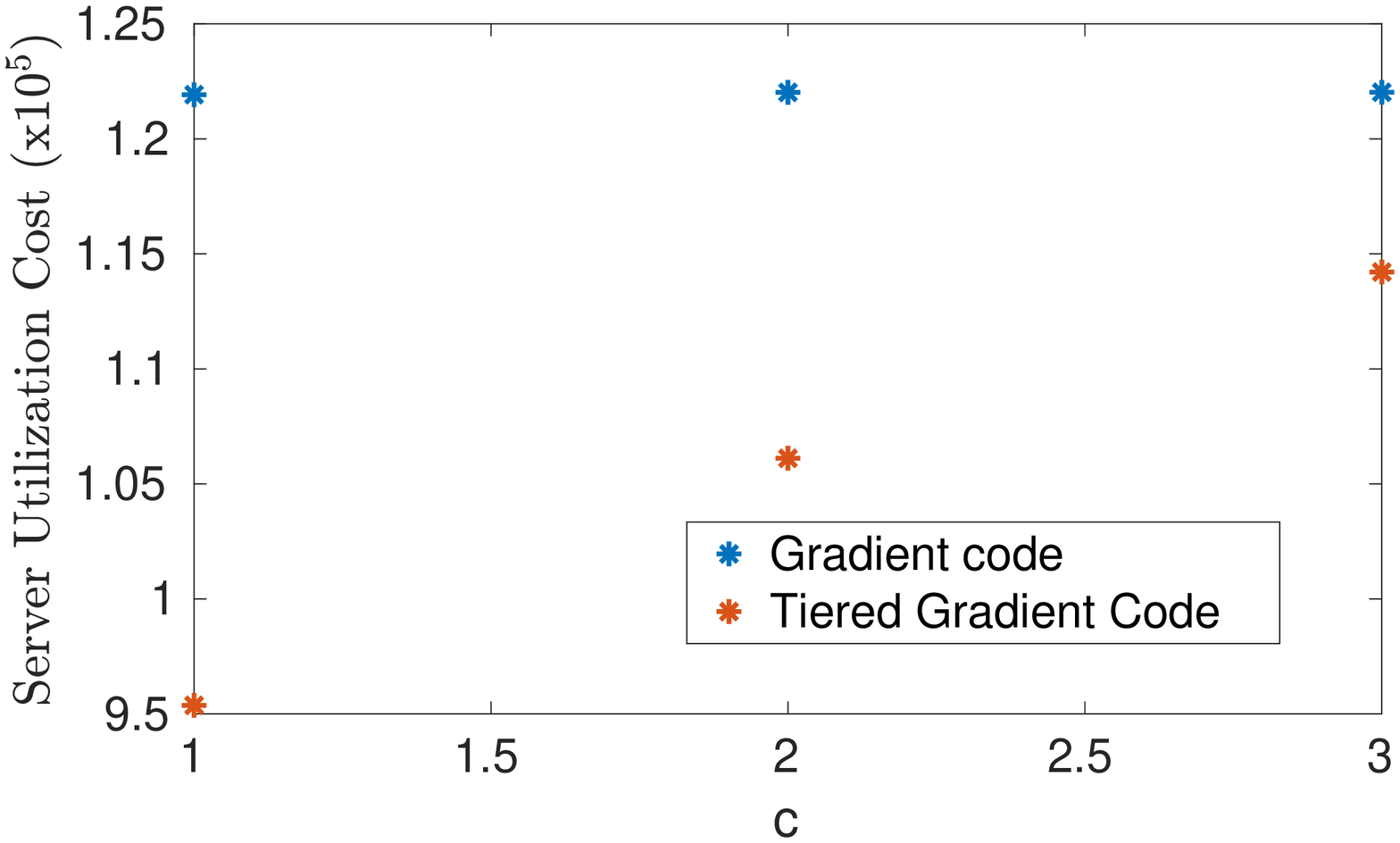}
		\caption{Server Utilization Cost as a function of $c$}
		\label{fig:SUC_Vs_c}
	\end{subfigure}
	\hspace{.1in}
	\begin{subfigure}[]{.4\textwidth}
			\includegraphics[width=\linewidth]{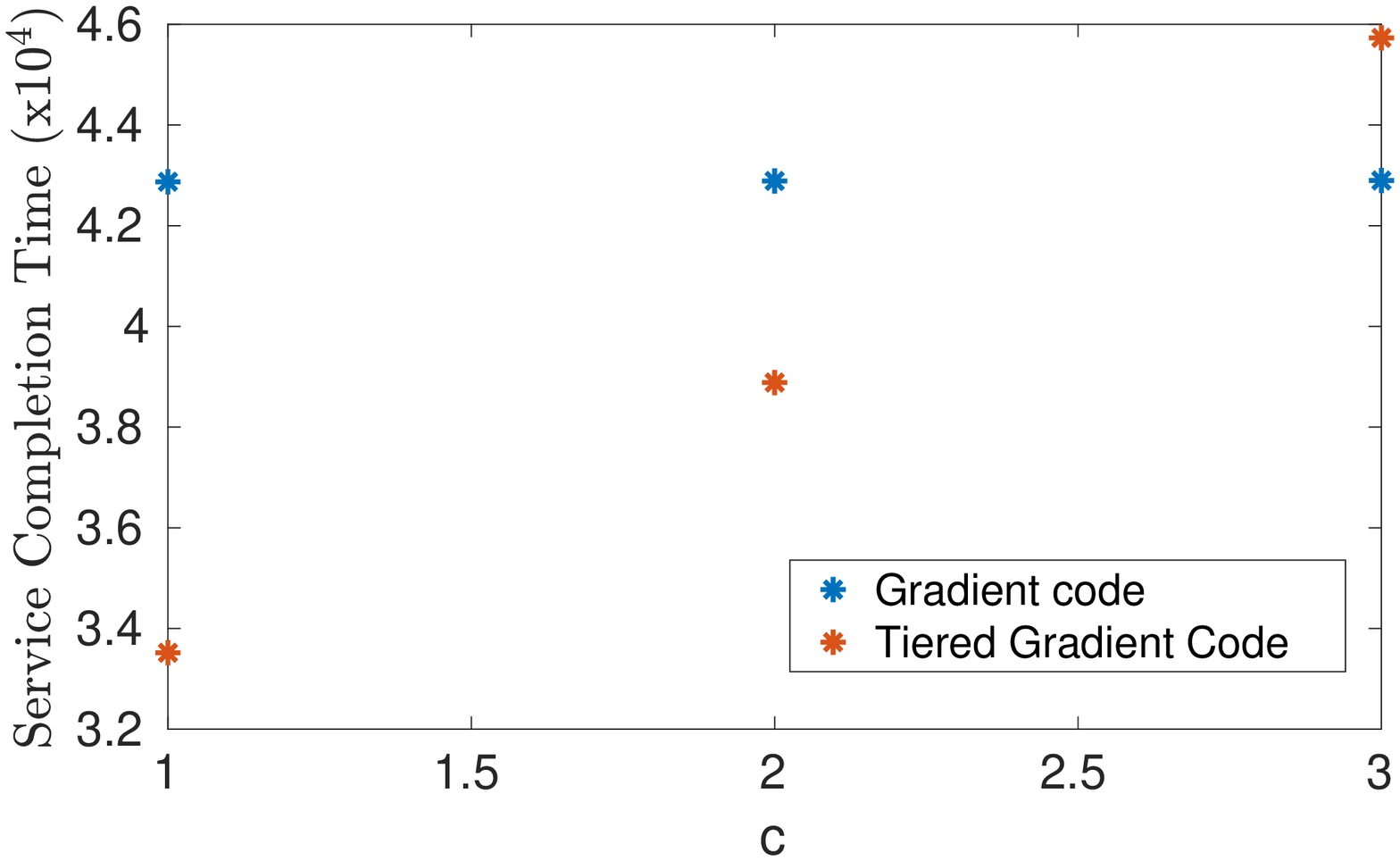}
		\caption{Service Completion Time as a function of $c$}
		\label{fig:SCT_Vs_c}
	\end{subfigure}
	\caption{Server Utilization Cost and Service Completion Time as a function of $c$ for $n_2=15$, $n_1=8$, and $k=5$. Task completion time is assumed to be distributed as SE2}\label{SE2varyc}
\end{figure*}
\if 0
\begin{figure*}[hbt!]
	\begin{subfigure}[]{.45\textwidth}
			\includegraphics[width=\linewidth]{n1_13_costc}
		\caption{Server Utilization Cost as a function of $c$}
		\label{fig:SUC_Vs_c13}
	\end{subfigure}
	\hspace{.1in}
	\begin{subfigure}[]{.45\textwidth}
		}	\includegraphics[width=\linewidth]{n1_13_timec}
		\caption{Service Completion Time as a function of $c$}
		\label{fig:SCT_Vs_c13}
	\end{subfigure}
	\caption{Server Utilization Cost and Service Completion Time as a function of $c$ for $n_2=15$, $n_1=13$, and $k=5$. Task completion time is assumed to be distributed as SE2}\label{SE2varyc13}
\end{figure*}
\fi 

\section{Numerical Evaluations}\label{sec:nume}
In this section, we compare two metrics for the proposed tiered gradient codes to that for the gradient codes in \cite{tandon2017gradient}. The first metric is the service completion time, defined as the time taken for the $k$ tasks to complete. The second metric is the server utilization cost, which is the sum over all $n_2$ servers, the time during which each of the server is used till the job completes. Since both the metrics are random variables, dependent on the execution times of the tasks, we average the metrics over $10^4$ random trials to get mean results. 

Two distribution models are typically used to model the task execution times at the servers, both these distributions model the effect of stragglers in the job computation. The first is the shifted exponential distribution \cite{aggarwal2017taming,aktas2018straggler} which has probability distribution of task execution at each server as $\Pr(T>x) = e^{-\mu (x-d)^+}$ for all $x>0$, for the shift parameter $d$ and the mean parameter $\mu$. The mean parameter $\mu$ scales with the task size, and we assume that $\mu = 0.1$ times the computation per server requires as given in Theorem 1. The shift parameter happens from a combination of disk I/O and computation, and thus we consider two models for this. The first, called SE1, is where $d = 5$  times the computation per server requires as given in Theorem 1. The second, called SE2,  is where $d = 100$  and is independent the computation per server requires modeling more of the disk I/O rather than computation. The second distribution model that is considered is the Pareto distribution,  which has probability distribution of task execution at each server as $\Pr(T>x) = (\min(x,x_m)/x)^\alpha$ for all $x$, where $x_m$ is the scale parameter and $\alpha$ is the shape parameter. For our evaluations, we let $\alpha = 1.5$, and have $x_m = 1$ times the computation per server requires as given in Theorem 1. This distribution is label Pa. 

We first consider $n_2=15$, $c=1$, and $k=5$, and vary $n_1$ from $5$ to $15$. The tradeoff between the server utilization cost and service completion time for both the proposed codes and the codes in \cite{tandon2017gradient} are depicted in Fig. \ref{fig:SUC_Vs_SCT_k5_fixed_p}, \ref{fig:SUC_Vs_SCT_k5_var_p}, and \ref{fig:SUC_Vs_SCT_k5_pareto} for SE1, SE2, and Pa, respectively. In all three cases, the point with lowest service completion time and server utilization cost corresponds to $n_1=2(k-1)$.  Thus, the decrease in task size more than compensates the increase in expected completion time due to the delayed launching of $n_2-n_1$ tasks. The use of efficient tiered gradient codes decrease both the metrics significantly for $n_1=2(k-1)$ as compared to the gradient codes which corresponds to $n_1=n_2$. 
We also consider a different case - $n_2=12$, $c=1$, and $k=3$ and plot the trade off between the server utilization cost and service completion time in Fig.  \ref{fig:SUC_Vs_SCT_k3_fixed_p},  \ref{fig:SUC_Vs_SCT_k3_var_p}, and \ref{fig:SUC_Vs_SCT_k3_pareto} for SE1, SE2, and Pa, respectively, and achieve the same conclusions. We note that there is no monotonically relation with the parameters $n_1$ for the two metrics which are in part due to the code constructions having discrete changes.  The proposed codes helps choose parameters that  can help system designer trade off the two metrics more efficiently. In Fig.  \ref{fig:SUC_Vs_SCT_k5_fixed_p}, we see more than $25\%$ decrease in the both the metrics for tiered gradient codes at $n_1=2(k-1)$ as compared to the gradient codes thus showing that delayed relaunching is helpful and the code construction reduces the amount of computation efficiently.  


So far, we assumed $c=1$. We next consider the impact of general $c$. We let $n_2=15$, $n_1=8$, $k=5$  in Fig. \ref{SE2varyc}. 
We note that the service utilization cost decreases with $c$ since $n_2-n_1$ servers are not started at $t=0$, and wait till completion of $c$ servers. However, the service completion time increases with $c$ since the delayed starting of tasks lead to a delay in waiting for $k$ tasks to finish. However, for $c=1$ and $c=2$, both the metrics are significantly lower in the proposed approach as compared to the gradient codes. For $c=3$, the server utilization cost for the proposed codes is significantly lower for the proposed codes at an expense of the service completion time. Thus, both the metrics may need to be taken into account together for deciding the code parameters for the tiered gradient codes. The proposed codes gives additional degrees of flexibility in the design that can lead to significantly improved performance in the different metrics of the use of distributed servers, including the task per server, server utilization cost, and service completion time. 

\section{Conclusions}
\label{conclusion}
This paper provides a framework for tiered gradient codes where all redundant gradient computation servers are not launched at the same time. The framework assumes that when $c$ out of $n_1$ launched servers finish execution, $n_2-n_1$ additional servers can be launched, with a property that any $k$ of the servers can be used to compute the gradients. The framework allows for asynchronous launching of servers, and speculative execution by delayed launching of certain servers. Improvement in task computations per server is shown as compared to the case where all $n_2$ servers are launched without waiting for the results from $c$ out of $n_1$ servers.
%

\end{document}